

\documentclass{article}

\usepackage{fullpage}

\newtheorem{definition}{Definition}

\newtheorem{theorem}{Theorem}
\newtheorem{lemma}{Lemma}
\newtheorem{corollary}{Corollary}
\newenvironment{proof}{\paragraph{Proof:}}{\hfill$\square$}

\usepackage[ruled]{algorithm2e}

\usepackage{amsmath,amsfonts,amssymb,bbm} 
\SetArgSty{textrm}  
\SetAlFnt{\small}
\SetAlCapFnt{\small}
\SetAlCapNameFnt{\small}
\SetAlCapHSkip{0pt}
\IncMargin{-\parindent}

\usepackage{tikz}
\usetikzlibrary{patterns,snakes}
\tikzstyle{overbrace text style}=[font=\tiny, above, pos=.5, yshift=5pt]
\tikzstyle{overbrace style}=[decorate,decoration={brace,raise=5pt,amplitude=3pt}]

\newcount\Comments
\Comments=1
\definecolor{darkgreen}{rgb}{0,0.7,0}
\newcommand{\kibitz}[2]{\ifnum\Comments=1\textcolor{#1}{#2}\fi}



\begin{document}


\markboth{Caragiannis et al.}{Truthful facility assignment with resource augmentation}

\title{Truthful Facility Assignment with Resource Augmentation: An Exact Analysis of Serial Dictatorship\thanks{Ioannis Caragiannis was partially supported by a Caratheodory research grant E.114 from the University of Patras. Aris Filos-Ratsikas was partially supported by the COST Action IC1205 on ``Computational Social Choice'' and by the ERC Advanced Grant 321171 (ALGAME). Aris Filos-Ratsikas, S\o ren Kristoffer Stiil Frederiksen, and Kristoffer Arnsfelt Hansen acknowledge support from the Danish National Research Foundation and The National Science Foundation of China (under the grant 61361136003) for the Sino-Danish Center for the Theory of Interactive Computation and from the Center for Research in Foundations of Electronic Markets (CFEM), supported by the Danish Strategic Research Council.}}

\author{Ioannis Caragiannis\footnote{
		University of Patras, Greece. E-mail: {\tt caragian@ceid.upatras.gr}}
	\and
	Aris Filos-Ratsikas\footnote{
		University of Oxford, United Kingdom. E-mail: {\tt aris.filos-ratsikas@cs.ox.ac.uk}}\\
	\and
	S\o ren Kristoffer Stiil Frederiksen\footnote{
		Aarhus University, Denmark. E-mail: {\tt sorensf@gmail.com}}
	\and
	Kristoffer Arnsfelt Hansen\footnote{
		Aarhus University, Denmark. E-mail: {\tt arnsfelt@cs.au.dk}}
	\and
	Zihan Tan\footnote{
		University of Chicago, Unites States. E-mail: {\tt zihantan@uchicago.edu}}
}
\date{} 
\maketitle
\begin{abstract}
	We study the \emph{truthful facility assignment} problem, where a set of agents with private most-preferred points on a metric space are assigned to facilities that lie on the metric space, under capacity constraints on the facilities. The goal is to produce such an assignment that minimizes the social cost, i.e., the total distance between the most-preferred points of the agents and their corresponding facilities in the assignment, under the constraint of truthfulness, which ensures that agents do not misreport their most-preferred points. 
	
	We propose a \emph{resource augmentation framework}, where a truthful mechanism is evaluated by its worst-case performance on an instance with enhanced facility capacities against the optimal mechanism on the same instance with the original capacities. We study a very well-known mechanism, Serial Dictatorship, and provide an exact analysis of its performance. 
Although Serial Dictatorship is a purely combinatorial mechanism, our analysis uses linear programming; a linear program expresses its greedy nature as well as the structure of the input, and finds the input instance that enforces the mechanism have its worst-case performance. Bounding the objective of the linear program using duality arguments allows us to compute tight bounds on the approximation ratio. Among other results, we prove that Serial Dictatorship has approximation ratio $g/(g-2)$ when the capacities are multiplied by any integer $g \geq 3$. Our results suggest that even a limited augmentation of the resources can have wondrous effects on the performance of the mechanism and in particular, the approximation ratio goes to $1$ as the augmentation factor becomes large. We complement our results with bounds on the approximation ratio of Random Serial Dictatorship, the randomized version of Serial Dictatorship, when there is no resource augmentation.
\end{abstract}






\newpage

\section{Introduction}

We study the \emph{facility assignment problem}, in which there is a set of agents and a set of \emph{facilities} with finite capacities; facilities are located on a metric space at points $F_i$ and each agent has a most-preferred point $A_i$, which is her private information. The goal is to produce an \emph{assignment} of agents to facilities, such that no capacity is exceeded and the sum of distances between agents and their assigned facilities, \emph{the social cost}, is minimized. A \emph{mechanism} is a function that elicits the points $A_i$ from the agents and outputs an assignment. We will be interested in \emph{truthful} mechanisms, i.e., mechanisms that do not incentivize agents to misreport their most-preferred locations and we will be aiming to find mechanisms that achieve a social cost as close as possible to that of the optimal assignment when applied to the true points $A_i$ of the agents.

Our setting has various applications such as assigning patients to personal GPs, vehicles to parking spots, children to schools and pretty much any matching environment where there is some notion of distance involved. Note that when being assigned to a personal GP, some patients might prefer to be assigned to someone closer to their house or workplace, so it is only natural to elicit their most-preferred points.

Our work falls under the umbrella of \emph{approximate mechanism design without money}, a term coined by Procaccia and Tennenholtz~\cite{PT:09} to describe problems where some objective function is optimized under the hard constraints imposed by the requirement of truthfulness. The standard measure of performance for truthful mechanisms is the \emph{approximation ratio}, which for our objective, is the worst-case ratio between the social cost of the truthful mechanism in question over the minimum social cost, calculated over all input instances of the problem.

However, it is arguably unfair to compare the performance of a mechanism that is severely limited by the requirement of truthfulness to that of an omnipotent mechanism that operates under no restrictions and has access to the real inputs of the agents, without giving the truthful mechanism any additional capabilities. This is even more evident in general settings, where strong impossibility results restrict the performance of all truthful mechanisms to be rather poor. The need for a departure from the worst-case approach has been often advocated in the literature, but the suggestions mainly involve some average case analysis or experimental evaluations. 

Instead, we will adopt a different approach, that has been made popular in the field of online algorithms and competitive analysis \cite{kalyanasundaram2000speed,sleator1985amortized}; the approach suggests enhancing the capabilities of the mechanism operating under some very limiting requirement (such as truthfulness or lack of information) before comparing to the optimal solution. Our main conceptual contribution is the adoption of a \emph{resource augmentation approach to approximate mechanism design}. In the resource augmentation framework, we evaluate the performance of a truthful mechanism on an input with additional resources, when compared to the optimal solutions for the set of original resources. For our problem, we consider the social cost achievable by a truthful mechanism on some input with augmented facility capacities against the optimal assignment under the original capacities given as input. 

More precisely, let $I$ be an input instance to the facility assignment problem and let $I_g$ be the same instance where each capacity has been multiplied by some integer constant $g$, that we call the \emph{augmentation factor}. Then, the \emph{approximation ratio with augmentation $g$} of a truthful mechanism $M$ is the worst-case ratio of the social cost achievable by $M$ on $I_g$ over the social cost of the optimal assignment on $I$, over all possible inputs of the problem. The idea is that if the ratio achievable by a mechanism with small augmentation is much better when compared to the standard approximation ratio, it might make sense to invest in additional resources. At the same time, such a result would imply that the set of ``bad'' instances in the worst-case analysis is rather pathological and not very likely to appear in practice.

To the best of our knowledge, this is the first time that such a resource augmentation framework has been employed in algorithmic mechanism design.

\subsection{Our results}

As our main contribution, we study the well-known truthful mechanisms for assignment problems, Serial Dictatorship (SD) and Random Serial Dictatorship (RSD). These are mechanisms of a greedy nature; SD fixes an arbitrary ordering of the agents and then assigns each agent to the facility closest to her most-preferred location $A_i$ from the set of facilities with leftover capacities. RSD is quite similar, but the ordering of agents is chosen uniformly at random from the set of all permutations of $n$ elements.

For SD, we provide an \emph{exact analysis}, obtaining tight bounds on the approximation ratio of the mechanism for all possible augmentation factors $g$. Specifically, we prove that when $n$ is the number of agents, while without any augmentation, the approximation ratio of SD is $2^n-1$, the approximation ratio with augmentation factor $g=2$ is exactly $\log (n+1)$ whereas for $g \geq 3$, the approximation ratio is $g/(g-2)$, i.e., a small constant. In particular, our results imply that as the augmentation factor becomes large, the approximation ratio of SD with augmentation goes to $1$. Our results for SD improve and extend some results in the literature of online algorithms \cite{kalyanasundaram2000online}, as we will explain in the next subsection.

To prove the approximation ratios for all augmentation factors, we use an interesting technique based on linear programming. Specifically, we first provide a directed graph interpretation of the assignment produced by SD and the optimal assignment, and then prove that the worst-case instances appear on $g$-trees, i.e., trees where (practically) every vertex has exactly $g$ successors. Then, we formulate the problem of calculating the worst ratio on such trees as a linear program and bound the ratio by obtaining feasible solutions to its dual. A feasible solution to the dual can be seen as a ``path covering'' of the assignment graph and we obtain the bounds by constructing appropriate path coverings of low cost. 

We also consider randomized mechanisms and the very well-known Random Serial Dictatorship mechanism. We prove that for augmentation factor $1$ (i.e., no resource augmentation), the approximation ratio of the mechanism is between $n^{0.26}$ and $n$; the result suggests that even a small augmentation ($g=2$) is a more powerful tool than randomization.
Again, as we will explain in the next subsection, this result has corollaries in the field of online algorithms. 

\subsection{Related Work}

Assignment problems are central in the literature of economics and computer science; the literature on one-sided matchings dates back to the seminal paper by Hylland and Zeckhauser \cite{HZ:79} and includes many very influential papers \cite{BM:01,SVE:99} in economics as well as a rich recent literature in computer science \cite{GC:10,anshelevich2010matching,CS14}. Serial Dictatorships (or their randomized counterparts) have been in the focus of much of this literature, mainly due to their simplicity and the fragile nature of truthfulness, which makes it quite hard to construct more involved truthful mechanisms. In a celebrated result, Svensson \cite{SVE:99} characterized a large class of truthful mechanisms by serial dictatorships. Random Serial Dictatorship has also been extensively studied \cite{krysta2014size, abdulkadirouglu1998random} and recently it was proven \cite{FFZ:14} that is asymptotically the best truthful mechanism for one-sided matchings under the general cardinal preference domain.  

The facility assignment problem can be interpreted as a matching problem;
somewhat surprisingly, matching problems in metric spaces have only recently been considered in the mechanism design literature. Emek et al. \cite{emek2015price} study a setting very closely related to ours, where the goal is to find matchings on metric spaces, but they are interested in how well a mechanism that produces a stable matching can approximate the cost of the optimal matching. In a conceptually similar work, Anshelevich and Shreyas \cite{anshelevich2015blind} study the performance of \emph{ordinal} matching mechanisms on metric spaces, when the limitation is the  lack of information. The fundamental difference between those works and ours is that we consider truthful mechanisms and bound their performance due to the truthfulness requirement; to the best of our knowledge, this is the first time where truthful mechanisms have been considered in a matching setting with metric preferences. 
Another difference between our work and the aforementioned papers is that they do not consider resource augmentation and only bound the performance of mechanisms on the same set of resources.\footnote{With the exception of the bi-criteria result in \cite{anshelevich2015blind}.} 
However, given the generality of the augmentation framework, the same idea could be applied to their settings. 
In that sense, our paper proposes a \emph{resource augmentation approach to algorithmic mechanism design} that could be adopted in most resource allocation and assignment settings.

As we mentioned earlier, the idea of resource augmentation was popularized by the field of online algorithms and competitive analysis and is tightly related to the literature on \emph{weak adversaries} where an online competitive algorithm is compared to the adversary that uses a smaller number of resources. The idea for this approach originated in the seminal paper by Sleator and Tarjan \cite{sleator1985amortized} and has been adopted by others ever since \cite{koutsoupias1999weak,young1994thek}; the term ``resource augmentation'' was explicitly introduced by Kalyanasundaram and Pruhs~\cite{kalyanasundaram2000speed}.

Most closely related to our problem is the \emph{online transportation problem} \cite{kalyanasundaram2000online,meyerson2006randomized} (also known as the minimum online metric bipartite matching). In particular, as we explain in Section \ref{sec:onlinetransportation} of the Appendix, results about the greedy algorithm in the online transportation problem imply bounds for the facility assignment problem. However, contrary to \cite{kalyanasundaram2000online}, our analysis is \emph{exact}, i.e. our results involve no asymptotics. Furthermore, compared to the related result in \cite{kalyanasundaram2000online}, we remark that our analysis is substantially different due to the use of linear programming; our primal-dual technique could be applicable for greedy assignment mechanisms on other resource augmentation settings, beyond the specific problem.

Compared to the related result in \cite{kalyanasundaram2000online}, we remark that our analysis is substantially different due to the use of linear programming. This technique for the analysis of purely combinatorial algorithms has found applications in many different contexts such as facility location \cite{JMM+03}, set cover \cite{ACK09}, online matching \cite{MY11}, maximum directed cut \cite{FJ15}, wavelength routing \cite{C09}, and revenue optimization \cite{ACV15}. Like in our case, these techniques usually lead to tight analysis. Also note that while the connection between SD and RSD and the greedy algorithm for the online transportation problem is straightforward, the two problems are fundamentally different and hence non-greedy online competitive algorithms do not imply any bounds for our setting and non-serial truthful mechanisms do not imply any bounds for the online setting. 

Finally, there is some resemblance between our problem and the facility location problem \cite{PT:09} that has been studied extensively in the literature of approximate mechanism design, in the sense that in both settings, agents specify their most preferred positions on a metric space. Note that the settings are fundamentally different however, since in the facility location problem, the task is to identify the appropriate point to locate a facility whereas in our setting, facilities are already in place and we are looking for an assignment of agents to them. 

\section{Preliminaries}

In the \emph{facility assignment} problem, there is a set $N=\{1,\ldots,n\}$ of agents and a set $M=\{1,\ldots,m\}$ of \emph{facilities}, where agents and facilities are located on a 
metric space, equipped with a distance function $d$. Each facility has a \emph{capacity} $c_i \in \mathbb{N_+}$, which is the number of agents that the facility can accommodate. We assume that $\sum_{i=1}^{m} c_i \geq n$, i.e., all agents can be accommodated by some facility. Each agent has a most preferred position $A_i$ on the space and his cost $d_i(j)$ from facility $j$ is the distance $d(A_i,F_j)$ between $A_i$ and the position $F_j$ of the facility.  Let $A=(A_1,\ldots,A_n)$ be a vector of preferred positions and call it a \emph{location profile}. Let $F=(F_1,\ldots,F_m)$ be the corresponding set of points of the facilities. A pair of agents' most preferred points and facility points $(A,F)$ is called an \emph{instance} of the facility assignment problem and is denoted by $I$.

The locations of the facilities are known but the location profiles are not known; agents are asked to report them to a central planner, who then decides on an \emph{assignment} $S$, i.e., a pairing of agents and facilities such that no agent is assigned to more than one facility and no facility capacity is exceeded. Let $S_i$ be the restriction of the assignment to the $i$'th coordinate, i.e., the facility to which agent $i$ is assigned in $S$ and let $\mathcal{S}$ be the set of all assignments. The \emph{social cost} of an assignment $S$ on input $I$ is the sum of the agents' costs from their facilities assigned by $S$ i.e., $\sum_{i=1}^n d_i(S_i)$.  A deterministic mechanism maps instances to assignments whereas a randomized mechanism maps instances to probability distributions over assignments. 

A mechanism is \emph{truthful} if no agent has an incentive to misreport his most preferred location. 
Formally, this is guaranteed when for every location profile $A$, any report $A_i'$, and any reports $A_{-i}$ of all agents besides agent $i$, it holds that 
$d_i(S_i)  \geq d_i(S'_i)$, where $S=M(I)$ and $S'=M(I')$, with $I=(A,F)$ and $I'=((A_i',A_{-i}),F)$.
For randomized mechanisms, the corresponding notion is \emph{truthfulness-in-expectation}, where an agent can not decrease her expected distance from the assigned facilities by deviating, i.e., it holds that $\mathbb{E}_{S \sim D} [d_i(S_i)] \geq \mathbb{E}_{S \sim D'} [d_i(S_i)]$, where $D$ and $D'$ are the probability distributions output by the mechanism on inputs $I$ and $I'$ respectively. A stronger notion of truthfulness for randomized mechanisms is that of \emph{universal truthfulness}, which guarantees that for every realization of randomness, there will not be any agent with an incentive to deviate. Alternatively, one can view a universally truthful mechanism as a mechanism that runs a deterministic truthful mechanism at random, according to some distribution.

As our main conceptual contribution, we will consider a \emph{resource augmentation} framework where the minimum social cost of any assignment will be compared with the social cost achievable by a mechanism on a location profile with augmented facility capacities. Given an instance $I$, we will use the term \emph{$g$-augmented instance} to refer to an instance of the problem where the input is $I$ and the facility of each capacity has been multiplied by $g$. We will denote that instance by $I_g$ and we will call $g$ the \emph{augmentation factor} of $I$. For example, when $g=2$, we will compare the minimum social cost with the social cost of a mechanism on the same inputs but with double capacities.

For the facility assignment problem, the optimal mechanism computes a minimum cost matching (which can be computed using an algorithm for maximum weight bipartite matching) and it can be easily shown that it is not truthful; in order to achieve truthfulness, we have to output suboptimal solutions. As performance measure, we define the \emph{approximation ratio with augmentation} of a mechanism $M$ as
\begin{eqnarray*}
	ratio_g(M) = \sup_{I} \frac{SC_M(I_g)}{SC_{OPT}(I)}
\end{eqnarray*}
\noindent where $SC_M(I_g)=\sum_{i=1}^n d_i(M(I_g)_i)$ is the social cost of the assignment produced by mechanism $M$ on input instance $I$ with augmentation factor $g$ and $SC_{OPT}(I)$ is the minimum social cost of any assignment on $I$ i.e., $SC_{OPT}(I) = \min_{S \in \cal{S}} \sum_{i=1}^n d_i(S_i)$. For randomized mechanisms, the definitions involve the expected social cost and are very similar.
Obviously, if we set $g=1$, we obtain the standard notion of the approximation ratio for truthful mechanisms \cite{PT:09}. For consistency with the literature, we will denote $ratio_1(M)$ by $ratio(M)$.

We will be interested in two natural truthful mechanisms that assign agents to facilities in a greedy nature. A \emph{serial dictatorship} (SD) is a mechanism that first fixes an ordering of the agents and then assigns each agent to his most preferred facility, from the set of facilities with non-zero residual capacities. Its randomized counterpart, \emph{Random Serial Dictatorship} (RSD), is the mechanism that first fixes the ordering of agents uniformly at random and then assigns them to their favorite facilities that still have capacities left. In other words, RSD runs one of the $n!$ possible serial dictatorships uniformly at random and hence it is universally truthful.

\section{Approximation Guarantees for Serial Dictatorships} \label{sec:analysis}

In this section we provide our main results, the upper bounds on the approximation ratio with augmentation of Serial Dictatorship, for all possible augmentation factors. The results can be summarized in the following theorem. In Section \ref{sec:lowerbounds}, we provide instances for which the bounds proven here are tight.
At the end of the section, we also consider Random Serial Dictatorship, and provide an upper bound on the approximation ratio of the mechanism when there is no resource augmentation.

\begin{theorem}\label{thm:SD}
	The approximation ratio of SD with augmentation factor $g$ in facility assignment instances with $n$ agents is 
	\begin{enumerate}
	\item $ratio(SD)\leq 2^{n}-1$, 
	\item $ratio_2(SD)\leq\log (n+1)$,
	\item $ratio_g(SD) \leq \frac{g}{g-2}$ when $g \geq 3$.
	\end{enumerate}
\end{theorem}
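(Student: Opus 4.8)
The plan is to prove all three bounds through a unified framework that captures the greedy behavior of Serial Dictatorship via a linear program, exactly as the paper's abstract and introduction promise. The key conceptual tool is to compare, agent by agent, the facility chosen by SD against the facility chosen by the optimal assignment, and to track how ``displacements'' cascade through the instance. The central structural observation should be that whenever SD is forced to assign agent $i$ to a facility other than her optimal one, it is because some earlier agent (in the dictatorship order) already occupied the slot; this induces a natural directed graph on the assignments where an edge points from the ``blocked'' agent to the agent who blocked her. The plan is to first set up this assignment graph and prove, via the triangle inequality, that the cost SD pays for agent $i$ can be bounded in terms of the optimal costs along a directed path in this graph. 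The greedy choice property gives the crucial inequality: an agent's SD cost is at most her optimal cost plus the cost incurred by whoever displaced her.

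Second, I would argue that the worst case is attained on structured instances --- the $g$-trees mentioned in the introduction --- where the displacement graph is a tree in which (essentially) every vertex has exactly $g$ children. The intuition is that without augmentation ($g=1$) each occupied slot can block at most a chain, giving geometric blowup $2^n-1$, while with augmentation factor $g$, each facility has $g$ slots so a single facility can absorb $g$ optimal assignments before any displacement cascades, flattening the tree and dramatically reducing the accumulated cost. I would then formulate the worst-case ratio over such trees as a \emph{linear program}: the variables are the optimal distances associated to each vertex/edge, the constraints encode the greedy triangle-inequality bounds and the tree structure, and the objective is the ratio $SC_{SD}/SC_{OPT}$ which, after normalizing $SC_{OPT}=1$, becomes maximizing $SC_{SD}$.

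Third, and this is where the three separate numerical bounds emerge, I would pass to the \emph{dual} LP and exhibit explicit feasible dual solutions. As the introduction states, a feasible dual solution corresponds to a ``path covering'' of the assignment graph, and the dual objective upper-bounds the primal optimum by weak duality. For $g\geq 3$ the plan is to construct a path covering whose total cost telescopes to the closed form $g/(g-2)$; the geometric structure of the $g$-tree (with branching $g$ but ``leakage'' factor related to $2/g$) should make this a convergent geometric series summing to $g/(g-2)$. For $g=2$ the series becomes harmonic rather than geometric --- the boundary case --- which is precisely why the bound degrades to $\log(n+1)$ and why the depth $n$ of the tree enters. For $g=1$ no absorption occurs, the recursion doubles at each level, and one obtains $\sum_{k=0}^{n-1}2^k=2^n-1$; here one can likely give a direct inductive argument on $n$ rather than invoking the LP.

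The main obstacle, I expect, is twofold. First, rigorously justifying the reduction to $g$-trees: one must show that an arbitrary instance's displacement graph can be transformed into such a tree without decreasing the ratio, which requires a careful charging/merging argument and care about the ``practically every vertex'' caveat at the boundary of the tree (the leaves and the root may have fewer than $g$ children). Second, guessing and verifying the correct dual solution (path covering) for the general $g\geq 3$ case: the telescoping to $g/(g-2)$ is clean in hindsight, but finding the right covering weights so that every primal constraint's dual feasibility holds simultaneously while the objective hits exactly $g/(g-2)$ is the delicate step. I would tackle the $g=1$ case first by direct induction to build intuition for the recursion, then handle $g\geq 3$ via the geometric dual, and finally treat $g=2$ as the limiting harmonic case, making sure the three bounds are consistent at their boundaries.
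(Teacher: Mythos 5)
Your plan reproduces the paper's proof architecture essentially step for step: the displacement graph with one node per facility and one directed edge per agent (from its optimal facility to its SD facility), the reduction of worst-case instances to directed $g$-trees, a primal LP whose constraints encode the greedy/triangle-inequality bounds along leaf-to-edge paths, and the passage to the dual, whose feasible solutions are exactly the ``path coverings'' whose cost bounds the ratio. As a roadmap it is the right one. The difficulty is that the two items you defer as ``obstacles'' are not side issues: they carry essentially all of the mathematical content, and neither is executed. The tree reduction requires a concrete surgery on the instance (delete the agents with $s_i=o_i$ and the ``greedy'' agents whose SD facility is a leaf, split any facility serving several SD-agents into coincident unit copies so that every node has out-degree at most one, check that the result is still a metric and still consistent with an SD run, and then iterate because the restricted optimal solution need no longer be optimal); without it the LP over $g$-trees is not even well posed. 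And the bound $g/(g-2)$ is not a generic telescoping: it comes from the specific covering $x_p=\frac{1}{g-2}g^{2-\ell}$ for a length-$\ell$ path reaching the root edge and $x_p=\frac{g-1}{g-2}g^{1-\ell}$ otherwise, verified by a bottom-up induction establishing the invariant $\sum_{p\in{\cal P}_e}x_p=\frac{g}{g-2}$ on every edge together with $\sum_{p\in\tilde{\cal P}_e}x_p-\sum_{p\in{\cal P}_e\cap{\cal P}_f}x_p=1$; finding and checking these weights \emph{is} the proof. Two smaller corrections to your intuition: for $g=2$ the paper's bound is $\log N$ in the number of \emph{nodes} of the $2$-tree (so the balanced tree, not the deep one, is the worst case), obtained by a recursive reweighting with the two-sided invariant $\log N\ge\sum_{p\in{\cal P}_e}x_p\ge\log N-\log N_e+1$ rather than by a harmonic series; and the $g=1$ case fits inside the same path-covering framework (weights $2^i$ on the paths of a directed path graph), consistent with, but slightly cleaner than, a separate ad hoc induction.
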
 

Before we proceed, we would like to point out that Statement 1 in Theorem \ref{thm:SD} can be obtained as a corollary of the results in the literature of the online transportation problem. Specifically, it can be obtained as a corollary of Theorem 2.5 in \cite{kalyanasundaram1993online}. However, for completeness, we will reprove Statement 1 as part of our more general framework.

In order to do that, we first need to introduce a different interpretation of the assignment produced by SD and the optimal assignment, in terms of a directed graph. Let us begin with a high-level roadmap of the proof of Theorem \ref{thm:SD}.
\begin{enumerate}
	\item We show how to represent an instance of facility assignment together with an optimal solution and a solution computed by the SD mechanism as a directed graph and argue that the instances in which the SD mechanism has the worst approximation ratio are specifically structured as directed trees.
	\item We observe that the cost of the SD mechanism in these instances is upper-bounded by the objective value of a maximization linear program defined over the corresponding directed trees.
	\item We use duality to upper-bound the objective value of this LP by the value of a feasible solution for the dual LP. This reveals a direct relation of the approximation ratio of the SD mechanism to a graph-theoretic quantity defined on a directed tree, which we call the cost of a path covering.
	\item Our last step is to prove bounds on this quantity; these might be of independent interest and could find applications in other contexts.
\end{enumerate}  

Consider an instance $I$ of facility assignment. Recall the interpretation of the problem as a metric bipartite matching and note that without loss of generality, each facility can be assumed to have capacity $1$ and $m\geq n$. Unless otherwise specified, agents and facilities are identified by the integers in $[n]$ and $[m]$, respectively. 

Now, let $O$ be any assignment on input ${I}$, and let $S$ be an assignment returned by the SD mechanism when applied on the instance $I_g$ (where each facility has capacity $g$). We use a directed graph to represent the triplet $I$, $O$, and $S$ as follows. The graph has a node for each facility. Each directed edge corresponds to an agent. A directed edge from a node corresponding to facility $j_1$ to a node corresponding to facility $j_2$ indicates that the agent corresponding to the edge is assigned to facility $j_1$ in $O$ and facility $j_2$ in $S$. Observe that there is at most one edge outgoing from each node; this edge corresponds to the agent that is assigned to the facility corresponding to the node in solution $O$. Furthermore, a node may have up to $g$ incoming edges, corresponding to agents assigned to the facility by the SD mechanism.

Representations as directed $g$-trees are of particular importance. A \emph{directed $g$-tree $T$} is an acyclic directed graph that has a root node $r$ of in-degree $1$ and out-degree $0$, leaves with in-degree $0$ and out-degree $1$, and intermediate nodes with in-degree $g$ and out-degree $1$. We now show that it suffices to restrict our attention to directed $g$-trees as graph representations of instances in which the SD mechanism achieves its worst performance.

\begin{lemma}\label{lem:gtreesenough}
	Given a instance $I$ with $n$ agents, an optimal solution $O$ for $I$ and a solution $S$ consistent with the SD mechanism when applied to instance $I_g$, there is another instance $I'$ with at most $n$ agents, with an optimal solution $O'$ and a solution $S'$ consistent with the application of the SD mechanism on the instance $I_g$ such that the representation graph of the triplet $(I',O',S')$ is a directed $g$-tree and such that \[\frac{\mbox{cost}(S,I_g)}{\mbox{cost}(O,I)}\leq \frac{\mbox{cost}(S',I'_g)}{\mbox{cost}(O',I')}.\]
\end{lemma}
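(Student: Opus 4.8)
The plan is to work entirely on the directed representation graph of the triple $(I,O,S)$ and to apply a finite sequence of surgeries, each producing a valid triple — an instance, an optimal assignment, and an SD-consistent assignment on its $g$-augmentation — whose cost ratio is no smaller than the previous one, until the graph becomes a directed $g$-tree. The workhorse is the elementary mediant inequality: if the agent set splits into two parts contributing $(N_1,D_1)$ and $(N_2,D_2)$ to $\bigl(\mathrm{cost}(S,I_g),\mathrm{cost}(O,I)\bigr)$, then the global ratio $\frac{N_1+N_2}{D_1+D_2}$ lies between $\frac{N_1}{D_1}$ and $\frac{N_2}{D_2}$; hence deleting whichever part has the smaller ratio never decreases the objective, and — crucially for the ``at most $n$ agents'' clause — every surgery I use will only \emph{delete} agents, so the count never rises above $n$.

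First I would record the two degree facts already established: every node has out-degree at most $1$ (its unique agent under $O$) and in-degree at most $g$ (the agents SD packs into that facility). A digraph with all out-degrees at most $1$ is a functional forest, so each weakly connected component is a rooted tree that may carry a single directed cycle. Splitting along weakly connected components and applying the mediant inequality, I restrict to the single component of largest cost ratio. If it contains a directed cycle $v_0\to v_1\to\cdots\to v_{k-1}\to v_0$, the corresponding agents are assigned by both $O$ and $S$ to the \emph{same} set of facilities, so rotating $O$ along the cycle is feasible; optimality of $O$ then shows the cycle agents contribute an SD cost at least their $O$ cost. Cutting the cycle edge of minimum local ratio leaves a path whose ratio is, by the mediant inequality, at least that of the whole cycle, and turns the component into a genuine rooted tree oriented toward a unique sink, the prospective root.

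It remains to regularize in-degrees so that every internal node is \emph{full}, i.e.\ has in-degree exactly $g$. Here the defining greedy property of SD enters: for any agent $i$ whose SD cost strictly exceeds its $O$ cost, the facility $O_i$ must already have been saturated when $i$ was processed, so the tail of every such cost-increasing edge has in-degree exactly $g$. Cost-non-increasing edges contribute local ratio at most $1$, so the mediant principle permits pruning the irregular, under-full internal nodes (deleting their lightest incident subtrees and re-rooting as needed) until only full internal nodes, in-degree-$0$ leaves, and an in-degree-$1$ root survive. Since every such step deletes agents, the final instance $I'$ has at most $n$ of them and the representation graph is a directed $g$-tree.

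I expect the genuine difficulty to lie not in this combinatorial bookkeeping but in the \emph{simultaneous} preservation of two global constraints under each deletion: that the restriction of $O$ remains optimal for the shrunken instance, and that the restriction of $S$ remains producible by some serial order of SD on its augmentation. Deleting an agent can free capacity that a cheaper matching would exploit, threatening optimality of $O'$; and it can remove the very agent whose presence saturated a facility and thereby forced a later cost-increasing assignment, threatening SD-consistency of $S'$ (indeed, cutting a cycle edge generically frees a slot that SD would then refill differently). The safe resolution, and the technical heart of the argument, is to stop treating the surgeries as edits to the original metric and instead reconstruct $I'$ explicitly from the target $g$-tree — placing facilities and agents on a suitably weighted tree (or line) metric so that the intended $O'$ is provably a minimum-cost matching, each cost-increasing edge's tail is saturated by its own $g$ children rather than by any deleted agent, and the intended $S'$ arises from an explicit processing order — after which one verifies that the cost ratio has not dropped below the original.
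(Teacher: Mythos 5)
Your high-level plan---the mediant inequality, delete-only surgeries, decomposition into weakly connected components, and the observation that the tail of every cost-increasing edge must be a saturated facility---matches the skeleton of the paper's argument, and you correctly identify the two places where the real difficulty lives: preserving optimality of the restricted $O$ and SD-consistency of the restricted $S$ under deletions. But the proposal does not close either hole; it defers both to an unexecuted final step (``reconstruct $I'$ explicitly from the target $g$-tree \ldots so that the intended $O'$ is provably a minimum-cost matching and $S'$ arises from an explicit processing order''), and that step is precisely where all the content is. The paper resolves the two issues concretely. For SD-consistency it never deletes a saturating agent: after removing the agents with $o_i=s_i$, it keeps exactly those ``greedy'' agents that SD packs into the optimal facility of a ``blocked'' agent (these are the agents whose presence forces the blocked agent outward), and it preserves all relevant distances by introducing new facilities that \emph{coincide} with existing ones on the metric space, so the modified $S'$ is still a legal SD run and every internal node automatically has in-degree exactly $g$ with no pruning. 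For optimality of $O'$ it does not attempt to prove that the restriction stays optimal at all: if it does not, the entire construction is rerun on the new instance with a genuine optimal solution, and termination is argued from the fact that the optimal cost strictly decreases at each round.

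Beyond the deferral, two of your intermediate surgeries would, as described, actively destroy SD-consistency rather than merely ``threaten'' it. Pruning cost-non-increasing edges to regularize in-degrees deletes exactly the greedy agents that saturate a blocked agent's optimal facility; in the reduced instance SD would send that blocked agent to its own, now free, facility, so the intended $S'$ is not producible by any ordering, and the mediant inequality says nothing about this. Cutting a cycle edge frees a slot in the same way. Moreover the cycle surgery is unnecessary: once loops and extraneous greedy agents are handled, every edge on a directed cycle would have to correspond to a blocked agent, and blockedness imposes a strict precedence (all $g$ agents SD assigns to $o_j$ must precede $j$) that cannot hold around a cycle. So the correct move is to prove that cycles are absent, not to break them.
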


\begin{proof}
	Let $o_i$ and $s_i$ denote the facility to which agent $i$ is connected in assignments $O$ and $S$, respectively. We say that agent $i$ is {\em optimal} if $o_i=s_i$. We say that agent $i$ is {\em greedy} if $s_i\not=o_i$ and less than $g$ agents are assigned to facility $o_i$ when SD decides the assignment of agent $i$. This means that $d(A_i,F_{s_i}) \leq d(A_i,F_{o_i})$. We say that agent $i$ is {\em blocked} if $g$ agents are already assigned to facility $o_i$ when SD decides the assignment of agent $i$. 
	
	Starting from $({\cal I}, O, S)$, we construct a new triplet $({\cal I}', O', S')$ as follows: 
	\begin{itemize}
		\item First, we remove all optimal agents. This corresponds to removing loops from the representation graph. 
		\item Then, we repeat the following process as long as there exists a blocked agent $i$ that is connected under $S$ to a facility $j$ that is the optimal facility of a greedy agent. In this case, we introduce a new facility $j'$ at point $F_{j'}$ such that $d(A_i,F_{j'})=d(A_i,F_j)$ and $d(F_{j'},X)=d(A_i,F_{j'})+d(A_i,X)$ for every other point $X$ of the space. The second equality guarantees that the set of all points corresponding to locations of agents and facilities that have survived and the newly introduced point $F_{j'}$ is a metric. This can easily be achieved by placing the new facility $j'$ such that it coincides with $j$ on the metric space. We assign agent $i$ to facility $j'$ instead of $j$; by the first equality above, this is consistent to the definition of the SD mechanism. In the representation graph, this step adds a new node corresponding to the new facility $j'$ and modifies the directed edge corresponding to blocked agent $i$ so that it is directed to the new node. 
		\item Then, we remove all greedy agents that are not connected under $S$ to optimal facilities of blocked agents together with their optimal facilities. 
		\item Then, for each facility $j$ that is used by $t\geq 2$ agents $i_1$, $i_2$, ..., $i_t$ in $S$ but is not used by any agent in $O$, we remove facility $j$ and introduce $t$ new facilities $j_1$, $j_2$, ..., $j_t$ such that $d(A_{i_k},j_k)=d(A_{i_k},j)$ for $k=1, ..., t$ and $d(X,j_k)=d(X,A_{i_k})+d(A_{i_k},j_k)$ for every other point $X$ of the space. Again, the second equality guarantees that the set of all points corresponding to locations of agents and facilities that have survived and the newly introduced points $F_{j_1}$, ..., $F_{j_t}$ is a metric. For $k=1, ..., t$, we assign agent $i_k$ to facility $j_k$; by the first equality above, this is consistent to the definition of the SD mechanism. In the representation graph, this step adds $t$ nodes corresponding to the new facilities $j_1$, ..., $j_t$ and, for $k=1, ..., t$, it modifies the directed edge corresponding to blocked agent $i_k$ so that it is directed to the new node $j_k$, and removes node corresponding to facility $j$. 
		\item Finally, we remove any facility that is not used by any of the non-removed agents in any of the two solutions. 
	\end{itemize}
	We denote by ${\cal I}'$ the resulting instance and by $O'$ the restriction of $O$ to the survived agents. Also, $S'$ is the assignment obtained by the modification of $S$ and considering the survived agents only. We remark that the representation graph of $({\cal I}',O',S')$ is a forest of directed $g$-trees. Indeed, the optimal facility of a greedy agent is not used by any agent in $S'$; the corresponding node is a leaf in the representation graph. Now, assume that the representation graph contains a directed cycle; this should consist of directed edges corresponding to blocked agents. By the definition above, this would mean that, for every agent $j$ in this cycle, the assignment of all agents that were assigned by the SD mechanism to the optimal facility $o_j$ took place before the assignment of agent $j$ to a facility; this yields a contradiction and no such cycle exists. The optimal facility of a blocked agent has out-degree $1$ and in-degree $g$. Nodes with zero out-degree have degree exactly $1$; these are nodes corresponding to the newly added facilities and serve as roots of the directed $g$-trees.
	
	Let $R$ be the set of (greedy and optimal) agents removed and observe that $d(A_i,F_{s_i})\leq d(A_i,F_{o_i})$ for each such agent $i\in R$. Hence, it is 
	
	\begin{eqnarray*}
		\frac{\mbox{cost}(S,{\cal I}_g)}{\mbox{cost}(O,{\cal I})} &=& \frac{\sum_{i\in [n]}{d(A_i,F_{s_i})}}{\sum_{i\in [n]}{d(A_i,F_{o_i})}}
		\leq  \frac{\sum_{i\in [n]}{d(A_i,F_{s_i})}-\sum_{i\in R}{d(A_i,F_{s_i})}}{\sum_{i\in [n]\setminus R}{d(A_i,F_{o_i})}-\sum_{i\in R}{d(A_i,F_{o_i})}}\\
		&=&\frac{\sum_{i\in [n]\setminus R}{d(A_i,F_{s'_i})}}{\sum_{i\in [n]\setminus R}{d(A_i,F_{o'_i})}}.
	\end{eqnarray*} 
	
	Clearly, if the representation of triplet ${\cal I}', O', S'$ consists of more than one $g$-trees, there is an instance ${\cal I}''$ and assignments $O''$ and $S''$ corresponding to the restriction of $({\cal I}', O', S')$ in one of the $g$-trees which satisfies $\frac{\mbox{cost}(S,{\cal I}_g)}{\mbox{cost}(O,{\cal I})} \leq \frac{\mbox{cost}(S'',{\cal I}''_g)}{\mbox{cost}(O'',{\cal I}'')}$. If $O''$ is indeed an optimal solution for instance ${\cal I}''$, the proof is complete. Otherwise, we repeat the whole process using instance ${\cal I}''$ as ${\cal I}$, solution $O$ to be the optimal solution for instance ${\cal I}''$, and the SD solution $S''$ until the solution $O''$ obtained is optimal for the $g$-tree instance obtained at the final step (this condition will eventually be satisfied as the optimal cost decreases in each application of the process). By setting $\tilde{\cal I}={\cal I}''$, $\tilde{O}=O''$, and $\tilde{S}=S''$ will then yield the triplet with the desired characteristics.
\end{proof}

So, in the following, we will focus on triplets $({\cal I},O,S)$ of a facility assignment instance ${\cal I}$ with at most $n$ agents, with an optimal solution $O$, and with an SD solution $S$ for instance ${\cal I}_g$ that have a graph representations as a directed $g$-tree $T$. Below, we use ${\cal P}$ to denote the set of all paths that originate from leaves. Given an edge $e$ of a $g$-tree, we use ${\cal P}_e$ (respectively, $\tilde{\cal P}_e$) to denote the set of all paths that originate from a leaf and cross (respectively, terminate with) edge $e$. We always use $e_r$ to denote the edge incident to the root of a $g$-tree.

Our next observation is that $\mbox{cost}(S,{\cal I}_g)$ is upper-bounded by the objective value of the following linear program.
\begin{eqnarray*}
	\mbox{maximize} & & \sum_{e\in T}{z_e}\\
	\mbox{subject to:} & & z_e-\sum_{a\in p\setminus\{e\}}{z_a} \leq \sum_{a\in p}{d(A_a,F_{o_a})}, e\in T, p\in \tilde{\cal P}_e\\
	& & z_e \geq 0, e\in T
\end{eqnarray*}
To see why, interpret variable $z_e$ as the distance of agent corresponding to edge $e$ of $T$ to the facility it is connected to under assignment $S$. Then, clearly, the objective $\sum_{e\in T}{z_e}$ represents $\mbox{cost}(S,{\cal I}_g)$. Now, how high can $\mbox{cost}(S,{\cal I})$ be? The LP essentially answers this question (partially, becauses it does not use all constraints of the SD mechanism but sufficiently for our purposes). In particular, the LP takes into account the fact that the distance of agent $e$ to the facility to which it is connected in $S$ is not higher than the distance from the agent to any leaf facility in its subtree; this follows by the definition of the SD mechanism since leaf facilities are by definition available throughout the execution of the SD mechanism.
Indeed, consider agent $e$ and a path $p\in \tilde{P}_e$. Since agent $e$ is connected to facility $s_e$ under SD and not to the facility corresponding to the leaf from which path $p$ originates from, this means that the distance $d(A_e,F_{s_e})$ is not higher than the distance of $A_e$ from the location of the facility corresponding to that leaf. Since $d$ is a metric, this distance is at most $d(A_e,F_{o_e})+\sum_{a\in p\setminus\{e\}}{d(F_{s_a},F_{o_a})} \leq d(A_e,F_{o_e})+\sum_{a\in p\setminus\{e\}}{(d(A_a,F_{s_a})+d(A_a,F_{o_a}))}$. So, the constraint associated with path $p\in \tilde{P}_e$ in the LP captures the inequality $d(A_e,F_{s_e})\leq d(A_e,F_{o_e})+\sum_{a\in p\setminus\{e\}}{d(F_{s_a},F_{o_a})} \leq d(A_e,F_{o_e})+\sum_{a\in p\setminus\{e\}}{(d(A_a,F_{s_a})+d(A_a,F_{o_a}))}$, by replacing $d(A_e,F_{s_e})$ with $z_e$ and $d(A_a,F_{s_a})$ with $z_a$ and rearranging the terms.

By duality, the cost $\mbox{cost}({\cal I},S)$ of solution $S$ is upper-bounded by the objective value of the dual linear program, defined as follows:
\begin{eqnarray*}
	\mbox{minimize} & & \sum_{p\in {\cal P}}{x_p\sum_{e\in p}{d(A_e,F_{o_e})}}\\
	\mbox{subject to:} & & \sum_{p\in {\cal P}_{e_r}}{x_p} \geq 1\\
	& & \sum_{p\in \tilde{\cal P}_{e}}{x_p}-\sum_{p\in {\cal P}_e\setminus \tilde{\cal P}_e}{x_p} \geq 1, e\in T\\
	& & x_p \geq 0, p\in {\cal P}
\end{eqnarray*}
Actually, for any feasible solution $x$ of the dual LP, the quantity $\sum_{p\in {\cal P}}{x_p\sum_{e\in p}{d(A_e,F_{o_e})}}$ is an upper bound on $\mbox{cost}(S,{\cal I}_g)$. We will refer to any assignment $x$ over the paths of ${\cal P}$ that satisfies the constraints of the dual LP as a {\em path covering} of the directed $g$-tree $T$ and will denote its cost by $c(x)=\max_{e\in T}{\sum_{p\in {\cal P}_e}{x_p}}$. We repeat these definitions for clarity:

\begin{definition}
	Let $T$ be a directed tree. A function $x:{\cal P}\rightarrow \mathbb{R}^+$ is called a path covering of $T$ if the following conditions hold:
	\begin{itemize}
		\item $\sum_{p\in {\cal P}_{e_r}}{x_p}\geq 1$ for the edge $e_r$ incident to the root of $T$;
		\item $\sum_{p\in \tilde{\cal P}_e}{x_p}-\sum_{p\in {\cal P}_e \cap {\cal P}_{f}}{x_p}\geq 1$ if $e\not= e_r$ and $f$ denotes the parent edge of $e$.
	\end{itemize}
	The cost $c(x)$ of $x$ is equal to $\max_{e\in T}{\sum_{p\in {\cal P}_e}{x_p}}$.
\end{definition}

\begin{lemma}\label{lem:pathcovercost}
	Let $g\geq 2$ be an integer, ${\cal I}$ be a facility assignment instance with an optimal solution $O$, $S$ be a solution of the SD mechanism when applied on instance ${\cal I}_g$, so that the triplet $({\cal I},O,S)$ is represented as a directed $g$-tree $T$ which has a path covering $x$. Then, $\mbox{cost}(S,{\cal I}_g) \leq c(x)\cdot \mbox{cost}(O,{\cal I})$.
\end{lemma}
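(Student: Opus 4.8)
The plan is to combine the weak LP duality bound already established in the text with a simple rearrangement of the dual objective. The excerpt has already observed that $\mbox{cost}(S,{\cal I}_g)$ is upper-bounded by the optimal value of the primal maximization LP, and that consequently, by weak duality, for any feasible solution $x$ of the dual---equivalently, any path covering of $T$---we have
\[
\mbox{cost}(S,{\cal I}_g) \;\leq\; \sum_{p\in{\cal P}} x_p \sum_{e\in p} d(A_e,F_{o_e}).
\]
So it remains only to show that the right-hand side is at most $c(x)\cdot\mbox{cost}(O,{\cal I})$, and this is where I would concentrate the argument.

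First I would interchange the order of summation on the right-hand side. A term $d(A_e,F_{o_e})$ enters the double sum once for every path $p$ that contains the edge $e$, and by definition the set of such paths is exactly ${\cal P}_e$. Hence
\[
\sum_{p\in{\cal P}} x_p \sum_{e\in p} d(A_e,F_{o_e}) \;=\; \sum_{e\in T} d(A_e,F_{o_e}) \sum_{p\in{\cal P}_e} x_p.
\]
Next I would bound each inner coefficient by the cost of the path covering: by the definition $c(x)=\max_{e\in T}\sum_{p\in{\cal P}_e} x_p$ we have $\sum_{p\in{\cal P}_e} x_p \leq c(x)$ for every edge $e$. Since each $d(A_e,F_{o_e})\geq 0$, substituting this bound yields
\[
\sum_{e\in T} d(A_e,F_{o_e}) \sum_{p\in{\cal P}_e} x_p \;\leq\; c(x)\sum_{e\in T} d(A_e,F_{o_e}).
\]

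Finally I would identify $\sum_{e\in T} d(A_e,F_{o_e})$ with $\mbox{cost}(O,{\cal I})$: in the graph representation each surviving agent corresponds to exactly one edge $e$ of $T$, and $o_e$ is the facility to which that agent is assigned under $O$, so the sum of $d(A_e,F_{o_e})$ over all edges is precisely the social cost of $O$ on ${\cal I}$. Chaining the three displayed relations then gives $\mbox{cost}(S,{\cal I}_g)\leq c(x)\cdot\mbox{cost}(O,{\cal I})$, as required. I do not expect a genuine obstacle here, since the heavy lifting (the primal bound on the SD cost and the weak-duality step) is already in place; the only point needing care is the summation swap, which relies on the convention that ${\cal P}_e$ is exactly the set of leaf-originating paths crossing $e$, together with the nonnegativity of $x$ and of the distances---this nonnegativity is what legitimizes the termwise replacement of $\sum_{p\in{\cal P}_e} x_p$ by the maximum $c(x)$.
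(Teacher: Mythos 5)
Your proposal is correct and matches the paper's own proof essentially line for line: weak LP duality, the summation interchange grouping terms by edge via ${\cal P}_e$, the termwise bound by $c(x)$, and the identification of $\sum_{e\in T} d(A_e,F_{o_e})$ with $\mbox{cost}(O,{\cal I})$. No further changes are needed.
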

\begin{proof}
	Using the interpretation of the variables of the primal LP, duality, and the definition of the cost of path covering $x$, we have that 
	\begin{eqnarray*}
		\mbox{cost}(S,{\cal I}_g) &=& \sum_{e\in T}{z_e}
		\leq  \sum_{p\in {\cal P}}{x_p\sum_{e\in p}{d(A_e,F_{o_e})}}
		= \sum_{e\in T}{d(A_e,F_{o_e}) \cdot \sum_{p\in {\cal P}_e}{x_p}}\\
		&\leq & c(x) \cdot \sum_{e\in T}{d(A_e,F_{o_e})}
		= c(x)\cdot \mbox{cost}(O,{\cal I})
	\end{eqnarray*}
	as desired.
\end{proof}

In order to establish the upper bounds in Theorem \ref{thm:SD}, it remains to show that path coverings with low cost do exist; this is what we do in the next three lemmas. We start with the Lemma for no augmentation.

\begin{lemma}\label{lem:g1}
	Let $T$ be a $1$-tree. Then, there is a path covering of $T$ of cost $2^{n}-1$.
\end{lemma}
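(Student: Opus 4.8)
The plan is to exhibit an explicit path covering and simply read off its cost. First I would observe that a $1$-tree degenerates into a single directed path: every intermediate node has in-degree and out-degree exactly $1$, and there is a unique leaf and a unique root, so the graph is a directed path on $n$ edges. I label these edges $e_1, e_2, \ldots, e_n$ in order from the (unique) leaf towards the root, so that $e_n = e_r$. Since there is only one leaf, every path in ${\cal P}$ starts at it; for $1 \le j \le n$ let $p_j$ be the path consisting of $e_1, \ldots, e_j$. Then ${\cal P} = \{p_1, \ldots, p_n\}$, $\tilde{\cal P}_{e_j} = \{p_j\}$, and ${\cal P}_{e_j} = \{p_k : k \ge j\}$; in particular, for $j<n$ the parent edge of $e_j$ is $e_{j+1}$, and ${\cal P}_{e_j} \cap {\cal P}_{e_{j+1}} = \{p_k : k \ge j+1\}$.

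With this bookkeeping in place, I would propose the covering $x_{p_j} = 2^{n-j}$ for each $j$. Checking feasibility then reduces to a single geometric-sum identity: the root constraint reads $\sum_{p\in {\cal P}_{e_r}} x_p = x_{p_n} = 1 \ge 1$, while for $j < n$ the constraint becomes
\[ x_{p_j} - \sum_{k > j} x_{p_k} = 2^{n-j} - \sum_{k=j+1}^{n} 2^{n-k} = 2^{n-j} - (2^{n-j}-1) = 1 \ge 1, \]
so $x$ is indeed a path covering. To compute the cost, note that for edge $e_j$ we have $\sum_{p\in {\cal P}_{e_j}} x_p = \sum_{k=j}^{n} 2^{n-k} = 2^{\,n-j+1}-1$, which is decreasing in $j$; hence $c(x)=\max_{e\in T}\sum_{p\in {\cal P}_e} x_p$ is attained at the leaf edge $e_1$ and equals $2^{n}-1$, as required.

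The computations are entirely routine, so the only step demanding genuine care is the correct reading of the definitions of ${\cal P}_e$, $\tilde{\cal P}_e$ and of the parent-edge term in the degenerate $1$-tree, together with confirming that the cost is governed by the leaf edge rather than by any internal edge. I do not regard any part of this as a real obstacle; if one prefers to derive the doubling pattern rather than guess it, the weights can be obtained by solving the constraints from the root downward at equality, which yields the recurrence $S_j = 1 + 2S_{j+1}$ with $S_n = 1$ for the partial sums $S_j = \sum_{k\ge j} x_{p_k}$, whose solution $S_j = 2^{\,n-j+1}-1$ produces exactly $x_{p_j}=2^{n-j}$ and the claimed cost $S_1 = 2^{n}-1$.
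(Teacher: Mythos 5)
Your proof is correct and follows essentially the same route as the paper's: both reduce the $1$-tree to a single directed branch and exhibit the explicit doubling weights $x_{p_j}=2^{n-j}$ on the nested leaf-originating paths, verifying feasibility by a geometric sum and reading the cost off the leaf edge. If anything, your normalization (giving the root path weight $1$) lands exactly on $2^n-1$, whereas the paper's version of the same construction is stated slightly more loosely.
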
	

\begin{proof}
First, observe that a directed $1$-tree consists of a single branch, where the first node (the leaf) has out-degree $1$ and in-degree zero, the last node (the root) has in-degree $1$ and out-degree $0$ and all other nodes have precisely one incoming edge and one outgoing edge. Therefore, for each edge $e$ in the tree, the set $\tilde{P}_{e}$ consists of a single path, that we will denote by $p_e$ and the set $P_e \cup P_f$, where $f$ is the parent edge of $e$, it holds that $P_e \cup P_f=P_{e_r}-\{p_e\}$, i.e., the set contains all the paths from the leaf to the root, except for path $p_e$ that ends at edge $e$. 

Let $m$ be the index of the facility corresponding to the leaf of the tree and let $i$ be the index of the facility with an incoming edge from facility index by $i+1$; observe that the root facility has index $1$. Let $e_i$ be the edge originating from facility $i+1$ to facility $i$. Now for every path $p_{e_i}$, let $x_{p_{e_i}}=2^i$. This is a complete assignment to all paths, since every path originates from the single leaf, and ends at either the root or some intermediate node. It is not hard to see that the assignment is a path covering, since $\sum_{p \in P_{e_r}} \geq 1$ and for every path $p_{e_{i}}$ it holds that $x_{p_{e_i}} \geq \sum_{j=1}^{i-1} x_{p_{e_j}}$. The cost of the path finding is $c(x) = \sum_{i=1}^{m-1} x_{p_{e_i}} = \sum_{i=1}^{m-1} 2^i = 2^m -1$ which is at most $2^n -1$, since $n\geq m$.
\end{proof}

In the following, we identify path coverings of low cost for the case of $g \geq 3$ and $g=2$. The next two lemmas complete the part of Theorem \ref{thm:SD} that regards the upper bounds.

\begin{lemma}\label{lem:g3}
	Let $g\geq 3$ be an integer and $T$ be a $g$-tree. Then, there is a path covering of $T$ of cost $\frac{g}{g-2}$.
\end{lemma}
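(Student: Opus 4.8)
The plan is to exhibit an explicit path covering of cost exactly $\gamma:=\frac{g}{g-2}$ and to check the two defining inequalities directly. The whole construction is driven by seeking a \emph{uniform} solution, in which along every edge $e$ the crossing weight $\sum_{p\in\mathcal{P}_e}x_p$, the terminating weight $\sum_{p\in\tilde{\mathcal{P}}_e}x_p$, and the weight continuing past $e$ take common values $\gamma$, $\tau$, $\delta$. Imposing the three natural relations — the recursion $\gamma=g\delta$ forced at a node of in-degree $g$, the identity $\gamma=\tau+\delta$, and tightness $\tau-\delta=1$ of the per-edge covering constraint — pins the constants down to $\delta=\frac{1}{g-2}$, $\tau=\frac{g-1}{g-2}$, and $\gamma=\frac{g}{g-2}$. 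The single arithmetic fact making everything fit is $g\delta=\gamma$.

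Concretely, I would define $x$ by a ``proportional extraction'' sweep of the $g$-tree from the leaves toward the root, tracking for each edge how much of its crossing weight is attributable to each descendant leaf. At a leaf edge I inject total weight $\gamma$ on the paths emanating from that leaf, let a portion $\tau$ terminate there (so $x_{(\ell,\text{leaf edge})}=\tau$) and let $\delta$ continue upward. At an intermediate edge $e$ with $g$ children $e_1,\dots,e_g$, the inductive hypothesis gives that each child contributes continuation mass $\delta$, so the weight entering $e$ is $g\delta=\gamma$; I then terminate a fraction $\tau/\gamma$ of each leaf's contribution at $e$ and continue the remaining fraction $\delta/\gamma$. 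A short induction from the leaves upward establishes the invariant $\sum_{p\in\mathcal{P}_e}x_p=\gamma$ at every edge and $\sum_{p\in\tilde{\mathcal{P}}_e}x_p=\tau$ at every non-root edge, with the root edge absorbing its entire crossing mass $\gamma$.

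Granting the invariant, the two checks are immediate. The cost is $c(x)=\max_e\sum_{p\in\mathcal{P}_e}x_p=\gamma=\frac{g}{g-2}$, as required. For a non-root edge $e$ with parent edge $f$, the paths in $\mathcal{P}_e\cap\mathcal{P}_f$ are exactly those crossing $e$ without terminating at it, carrying weight $\gamma-\tau=\delta$, so the second covering inequality reads $\tau-\delta=\frac{g-1}{g-2}-\frac{1}{g-2}=1\ge 1$, tight. For the root edge the crossing weight is $\gamma=\frac{g}{g-2}\ge 1$ (using $g\ge 3$), which is the first inequality. Since every weight produced by the extraction is nonnegative, $x$ is a genuine path covering, and Lemma~\ref{lem:pathcovercost} converts its cost into the claimed bound on the approximation ratio.

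The step I expect to require the most care is establishing the invariant uniformly over an \emph{unbalanced} $g$-tree, where leaves sit at different depths and an intermediate node may mix leaf children with deep-subtree children. The proportional extraction is precisely what tames this: because every edge — leaf or intermediate — carries continuation mass exactly $\delta$, every node of in-degree $g$ receives crossing mass exactly $g\delta=\gamma$ regardless of the shapes of its subtrees, so depth never enters the bookkeeping. I would therefore phrase the induction on the edges ordered from leaves to root, maintaining the per-leaf continuation accounting, and separately dispatch the degenerate base case of the single-edge tree, where one edge is simultaneously the leaf edge and the root edge and the root inequality $\gamma\ge 1$ alone suffices.
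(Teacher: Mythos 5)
Your proposal is correct and is essentially the paper's own proof: the proportional-extraction sweep you describe assigns weight $\frac{g-1}{g-2}g^{1-\ell}$ to a length-$\ell$ path not reaching the root and $\frac{1}{g-2}g^{2-\ell}$ to one that does, which is exactly the paper's explicit assignment, and your bottom-up invariant $\sum_{p\in\mathcal{P}_e}x_p=\frac{g}{g-2}$ together with the tight check $\tau-\delta=1$ is the same induction and feasibility verification. No gaps.
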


\begin{proof}
	We prove the lemma using the following assignment $x$: for every path $p$ of length $\ell$, we set $x_p=\frac{1}{g-2}g^{2-\ell}$ if it contains and edge that is adjacent to the root and $x_p=\frac{g-1}{g-2}g^{1-\ell}$ otherwise.
	
	We will first show that $\sum_{p\in{\cal P}_e}{x_p}=\frac{g}{g-2}$ for every edge $e$ using induction. We will do so by visiting the edges in a bottom-up manner (i.e., an edge will be visited only after its child-edges have been visited) and prove that the equality for edge $e$ using the information that the equality holds for its child-edges. As the basis of our induction, consider an edge $e$ that is adjacent to a leaf at depth $\ell\geq 1$ from the root. If $\ell=1$, this means that the tree consists of a single edge and there is a single path $p$ with $x_p=\frac{g}{g-2}$. If $\ell\geq 2$, then the paths that contain edge $e$ are those who end at each ancestor of the leaf adjacent to $e$. Hence,
	\begin{eqnarray*}
		\sum_{p\in {\cal P}_e}{x_p}&=&\sum_{i=1}^{\ell-1}{\frac{g-1}{g-2}g^{1-i}}+\frac{1}{g-2}g^{2-\ell}
		= \frac{g}{g-2}.
	\end{eqnarray*}
	Now, let us focus on a non-leaf edge $e$ and assume that $\sum_{p\in {\cal P}_{e_i}}{x_p}=\frac{g}{g-2}$ for each child-edge $e_i$ (for $i\in [g]$) of $e$ (this is the induction hypothesis). Let $u$ be the node to which edges $e$ and $e_i$ with $i\in [g]$ are incident. The set of paths in ${\cal P}_e$ consists of the following disjoint sets of paths: for each edge $e_i$ and for each path $p\in \tilde{\cal P}_{e_i}$, set ${\cal P}_e$ contains all super-paths of $p$, i.e., paths originating from the leaf-node reached by $p$ and ending at each ancestor of node $u$; we use the notation $\mbox{sup}(p)$ to denote the set of super-paths of $p$. Observe that, the definition of $x$ implies that a super-path $q$ of $p$ that is longer than $p$ by $j$ has $x_q=\frac{1}{g-1}g^{1-j}x_p$ if $q$ is adjacent to the root and $x_q=g^{-j} x_p$ otherwise. Hence, assuming that node $u$ is at depth $\ell\geq 1$ from the root, we have that 
	
	\begin{eqnarray*}
		\sum_{p\in {\cal P}_e}{x_p}&=&\sum_{i=1}^g{\sum_{p\in \tilde{\cal P}_{e_i}}{\sum_{q\in \mbox{\tiny sup}(p)}{x_q}}}
		= \left(\sum_{j=1}^{\ell-1}{g^{-j}}+\frac{1}{g-1}g^{1-\ell}\right)\sum_{i=1}^g{\sum_{p\in \tilde{\cal P}_{e_i}}{x_p}}\\
		&=& \frac{1}{g-1}\left(\sum_{i=1}^g{\sum_{p\in {\cal P}_{e_i}}{x_p}}-\sum_{p\in {\cal P}_e}{x_p}\right), 
	\end{eqnarray*}
	which yields $\sum_{p\in {\cal P}_e}{x_p}=\frac{g}{g-2}$ as desired, since $\sum_{p\in {\cal P}_{e_i}}{x_p}=\frac{g}{g-2}$ by the induction hypothesis.
	
	It remains to show feasibility. Clearly, $\sum_{p\in {\cal P}_e}{x_p}=\frac{g}{g-2} \geq 1$ if $e$ is adjacent to the root. Otherwise, consider an edge $e$, its parent edge $f$, and their common endpoint $u$. Assuming that $u$ is at depth $\ell$ from the root (and using definitions and observations we used above), we have
	\begin{eqnarray*}
		\sum_{p\in {\cal P}_e\cap {\cal P}_f}{x_p} &=& \sum_{p\in \tilde{\cal P}_e}{\sum_{q\in \mbox{\tiny sup}(p)}{x_q}}
		= \left(\sum_{j=1}^{\ell-1}{g^{-j}}+\frac{1}{g-1}g^{1-\ell}\right)\sum_{p\in \tilde{\cal P}_e}{x_p}
		= \frac{1}{g-1}\sum_{p\in \tilde{\cal P}_e}{x_p},
	\end{eqnarray*}
	which, together with the fact that $\frac{g}{g-2}=\sum_{p\in{\cal P}_e}{x_p}=\sum_{p\in {\cal P}_e\cap {\cal P}_f}{x_p}+\sum_{p\in \tilde{\cal P}_e}{x_p}$ yields $\sum_{p\in {\cal P}_e\cap {\cal P}_f}{x_p}=\frac{1}{g-2}$ and $\sum_{p\in \tilde{\cal P}_e}{x_p}=\frac{g-1}{g-2}$ and, consequently, $\sum_{p\in \tilde{\cal P}_e}{x_p}-\sum_{p\in {\cal P}_e\cap {\cal P}_f}{x_p}=1$ as desired.  
\end{proof}

Finally, we state the lemma for augmentation factor $g=2$. 

\begin{lemma}\label{lem:g2}
	Let $T$ be an $N$-node $2$-tree. Then, there is a path covering of $T$ of cost at most $\log{N}$.
\end{lemma}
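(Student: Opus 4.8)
The plan is to reduce the construction of a cheap path covering to a balancing recursion on the binary tree, and then to recognise the optimal cost as a Shannon entropy that is bounded by the logarithm of the number of leaves.

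\emph{Reduction to edge loads.} For $g=2$ a directed $2$-tree is a full binary tree (every internal node has exactly two children) rooted at the unique child $v$ of the root $r$, with $e_r$ the edge from $v$ to $r$. For each edge $e$ write $y_e=\sum_{p\in\mathcal{P}_e}x_p$ for its load and, for $e\neq e_r$ with parent edge $f$, $w_e=\sum_{p\in\mathcal{P}_e\cap\mathcal{P}_f}x_p$ for the part of the covering that crosses $e$ and continues upward. Since $\sum_{p\in\tilde{\mathcal{P}}_e}x_p=y_e-w_e$, the two path-covering conditions become $y_{e_r}\ge 1$ and $y_e-2w_e\ge 1$, the cost is $\max_e y_e$, and counting paths through an internal node $u$ with child edges $e_1,e_2$ and parent edge $e_u$ yields the conservation law $y_{e_u}=w_{e_1}+w_{e_2}$. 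First I would prove the converse: any nonnegative $(y_e,w_e)$ with $0\le w_e\le y_e$ satisfying this conservation law and the two inequalities is realised by an actual nonnegative path covering $x$, since all weight is injected at the leaves, flows monotonically toward $r$, and terminates at interior nodes, so an ordinary flow decomposition on the tree produces the required leaf-to-node paths. I expect this realisability step to be the part that needs the most care.

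\emph{A balancing recursion.} From $y_{e_u}=w_{e_1}+w_{e_2}\le\frac12(y_{e_1}-1)+\frac12(y_{e_2}-1)<\max(y_{e_1},y_{e_2})$ the loads strictly decrease toward the root, so the cost is attained on a leaf edge. I would therefore make every inequality tight and give all leaf edges a common load $L$; introducing the gap $g_u:=L-y_{e_u}$ turns the recursion into $g_\ell=0$ at each leaf and $g_u=1+\frac12(g_{e_1}+g_{e_2})$ at each internal node, while the root condition $y_{e_r}\ge 1$ reads $L\ge 1+g_v$. Choosing $L=1+g_v$ keeps every load in $[1,L]$, so all $w_e=(y_e-1)/2$ lie in $[0,y_e]$ and the realisability hypotheses of the first step hold; this yields a valid covering of cost $1+g_v$.

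\emph{Entropy bound.} The recursion $g_\ell=0$, $g_u=1+\frac12(g_{e_1}+g_{e_2})$ computes exactly the expected length of a walk that starts at $v$ and moves to a uniformly random child at every internal node. Assigning each leaf $\ell$ at depth $d_\ell$ its reaching probability $q_\ell=2^{-d_\ell}$, fullness of the tree gives the Kraft identity $\sum_\ell q_\ell=1$, and $g_v=\sum_\ell q_\ell d_\ell=\sum_\ell q_\ell\log_2(1/q_\ell)=H(q)$ is the Shannon entropy of $q$. Since $q$ is supported on the $\lambda$ leaves, $H(q)\le\log_2\lambda$, and a full binary tree with $\lambda$ leaves has $2\lambda-1$ nodes, so with the root $N=2\lambda$. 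Hence the covering has cost $1+g_v\le 1+\log_2(N/2)=\log_2 N=\log N$, which is the desired bound. The one genuine obstacle, beyond bookkeeping, is the realisability argument in the first step; the remainder is the recursion together with the standard inequality $H(q)\le\log_2\lambda$.
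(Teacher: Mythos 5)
Your reduction to edge loads $y_e=\sum_{p\in{\cal P}_e}x_p$ and $w_e=\sum_{p\in{\cal P}_e\cap{\cal P}_f}x_p$ is sound (the covering constraints depend only on these aggregates, and a proportional flow decomposition realises any feasible pair $(y,w)$ once all $y_e>0$), and the identity $g_v=H(q)\le\log_2\lambda$ is correct. The genuine gap is the sentence ``Choosing $L=1+g_v$ keeps every load in $[1,L]$.'' The gap quantity is \emph{not} maximised at the root edge: $1+\frac12(a+b)$ can be smaller than $\max(a,b)$ once the gaps exceed $2$, and your recursion only controls $\mathrm{gap}(e_r)$. Concretely, let $v$ have one leaf child and one child $w$ rooting a complete binary subtree in which every leaf is at depth $d$ below $w$. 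Then $\mathrm{gap}(e_w)=d$ while $\mathrm{gap}(e_r)=1+\frac{d}{2}$, so your choice $L=1+\mathrm{gap}(e_r)=2+\frac{d}{2}$ gives $y_{e_w}=2-\frac{d}{2}$, which is $0$ for $d=4$ and negative beyond; then $w_{e_w}=(y_{e_w}-1)/2<0$ and the constructed object is not a path covering at all. So the proof as written fails on unbalanced trees.

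The fix is short and preserves your architecture: take $L=1+\max_e\mathrm{gap}(e)$, observe that every load $y_e=L-\mathrm{gap}(e)$ is then at least $1$ (hence $w_e=(y_e-1)/2\ge 0$, and the root constraint holds since $\mathrm{gap}(e_r)\le\max_e\mathrm{gap}(e)$), and apply your entropy bound to \emph{every} subtree rather than only the whole tree: $\mathrm{gap}(e)$ is the entropy of the random-walk leaf distribution of the subtree below $e$, hence at most $\log_2\lambda_e\le\log_2(N/2)$, giving cost $L\le\log_2 N$. This per-subtree invariant is precisely what the paper proves by induction --- its inequality $\sum_{p\in{\cal P}_e}x_p\ge\log N-\log N_e+1$ is, in your notation, $\mathrm{gap}(e)\le\log\lambda_e$, and its appeal to the AM--GM inequality is the concavity hiding behind $H(q)\le\log\lambda$. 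So after the repair your argument is essentially a reformulation of the paper's, with the pleasant feature that the entropy interpretation makes the paper's seemingly ad hoc redistribution of temporary values look canonical; but as submitted, the missing quantifier (bounding the gap at the root only, instead of at every edge) is a real error, not a bookkeeping omission.
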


\begin{proof}
We will construct the path covering $x$ by visiting the edges of the tree in a bottom-up manner, i.e., first visiting edges that are incident to leaves and in such a way that an edge that is not adjacent to a leaf is visited only after its two child-edges have been visited. The assignment $x$ will be defined using a temporary assignment $y$. When visiting an edge $e$ that is adjacent to a leaf, we determine the temporary value $y_p=\log{N}$ associated with the path $p$ consisting of edge $e$ only. When visiting an edge $e$ that is not adjacent to a leaf, we set a temporary positive value $y_p$ for each path $p$ that begins with edge $e$ and we determine the final value $x_p$ for each path of $\tilde{\cal P}_{e_i}$ that begins with the child-edge $e_i$ (with $i\in\{1,2\}$) of $e$. In particular, let $p$ be a path of $\tilde{\cal P}_{e_i}$ and let $y_p$ be the temporary value assigned to it during our previous visit to edge $e_i$. After the phase associated with edge $e$, for the super-path $q$ of $p$ that begins with edge $e$, we set the temporary value 
$$y_q=\frac{\sum_{p'\in \tilde{\cal P}_{e_i}}{y_{p'}}-1}{2\sum_{p'\in \tilde{\cal P}_{e_i}}{y_{p'}}}y_p$$ and, determine the final value
$$x_p=\frac{\sum_{p'\in \tilde{\cal P}_{e_i}}{y_{p'}}+1}{2\sum_{p'\in \tilde{\cal P}_{e_i}}{y_{p'}}}y_p.$$
Let $f$ denote the parent of edge $e$ (if any). Observe that $x_p+y_q=y_p$ which means that the temporary value of a path $p$ in $\tilde{\cal P}_e$ is redistributed as final value of the path and temporary value of its super-path that begins with $f$. This argument can be repeated for all super-paths of $p$ and implies that the total temporary value of the paths in $\tilde{\cal P}_e$ is redistributed as total final value of path $p$ and the paths in $\mbox{sup}(p)$, i.e., $y_p=\sum_{q\in \{p\}\cup \mbox{\tiny sup}(p)}{x_p}$.

Also, observe that $x_p-y_q=\frac{y_p}{\sum_{p'\in \tilde{\cal P}_{e_i}}{y_{p'}}}$ after the phase associated with edge $e$ and, hence, $\sum_{p\in \tilde{\cal P}_{e_i}}{x_p} - \sum_{q\in {\cal P}_{e_i}\cap \tilde{\cal P}_e}{y_q}=1$. Since the temporary value $y_p$ on a path $p$ is redistributed as final value on $p$ and the paths of $\mbox{sup}(p)$, we have $\sum_{q\in {\cal P}_{e_i}\cap \tilde{\cal P}_e}{y_q}=\sum_{p\in {\cal P}_{e_i}\cap {\cal P}_e}{x_p}$ and the feasibility condition $\sum_{p\in \tilde{\cal P}_{e_i}}{x_p} - \sum_{p\in {\cal P}_{e_i}\cap {\cal P}_e}{x_p}=1$ on edge $e_i$ follows by the last two equalities.

We still have to prove the bound on the cost of $x$ as well as the feasibility condition for the edge adjacent to the root. In order to do so, we will prove that for every edge $e=(u,v)$ which defines a subtree with $N_e$ nodes (including both its endpoints), it holds that $\log{N}\geq \sum_{p\in {\cal P}_e}{x_p} \geq \log{N}-\log{N_e}+1$.

Since $N_e=2$ for every edge $e$ incident to a leaf, both inequalities hold (and are essentially the same equality) in this case. Now consider an edge $e$ that is not incident to a leaf and is such that $\log{N}\geq \sum_{p\in {\cal P}_{e_i}}{x_p} \geq \log{N}-\log{N_{e_i}}+1$ for each child-edge $e_i$ (with $i\in\{1,2\}$) of $e$. Using the feasibility condition for edges $e_1$ and $e_2$, we have
\begin{eqnarray}\nonumber
\sum_{p\in {\cal P}_e}{x_p} &=& \sum_{p\in {\cal P}_e\cap {\cal P}_{e_1}}{x_p} + \sum_{p\in {\cal P}_e\cap {\cal P}_{e_2}}{x_p}\\\nonumber
&=& \frac{1}{2}\sum_{p\in {\cal P}_{e_1}\cap {\cal P}_e}{x_p}+\frac{1}{2}\left(\sum_{p\in \tilde{\cal P}_{e_1}}{x_p}-1\right) 
+\frac{1}{2}\sum_{p\in {\cal P}_{e_2}\cap {\cal P}_e}{x_p}+\frac{1}{2}\left(\sum_{p\in \tilde{\cal P}_{e_2}}{x_p}-1\right)\\\label{eq:recursion}
&=& \frac{1}{2}\sum_{p\in {\cal P}_{e_1}}{x_p}+\frac{1}{2}\sum_{p\in {\cal P}_{e_2}}{x_p}-1.
\end{eqnarray}
Recall that $\sum_{p\in {\cal P}_e}{x_p}=\log{N}$ for every edge that is incident to a leaf. So, (\ref{eq:recursion}) implies that $\sum_{p\in {\cal P}_e}{x_p}\leq \log{N}$ for any edge $e$ as well, and the bound on the cost of $x$ follows. 

Using (\ref{eq:recursion}) and the assumption on the total final value of paths in ${\cal P}_{e_1}$ and ${\cal P}_{e_2}$, we get
\begin{eqnarray*}
	\sum_{p\in {\cal P}_e}{x_p} &\geq & \frac{1}{2}\left(\log{N}-\log{N_{e_1}}+1\right)
	+\frac{1}{2}\left(\log{N}-\log{N_{e_2}}+1\right)-1\\
	&=& \log{N} - \log{\sqrt{N_{e_1}\cdot N_{e_2}}}
	\geq  \log{N}-\log{\left(\frac{N_{e_1}+N_{e_2}}{2}\right)}\\
	&= & \log{N}-\log{N_e}+1.
\end{eqnarray*}
The second inequality follows by the relation of the geometric and arithmetic mean and the last equality is due to the fact that $N_e=N_{e_1}+N_{e_2}$. This completes the proof of the lemma.
\end{proof}

\subsection{The approximation ratio of Random Serial Dictatorship}

We have shown that the performance of SD significantly improves even with a small augmentation factor. A natural next question is to study its randomized counterpart, RSD. Could randomization help in achieving much better ratios? In the following, we prove an approximation guarantee for RSD, when there is no resource augmentation. 

\begin{theorem}\label{thm:RSD}
	The approximation ratio of RSD without resource augmentation is $ratio(RSD)\leq n$.
\end{theorem}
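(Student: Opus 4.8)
The plan is to bound the approximation ratio of RSD by comparing the expected cost of RSD to the cost of an optimal assignment, agent by agent. Fix an instance $I$ with $n$ agents and let $O$ be an optimal assignment with cost $\mbox{cost}(O,I)=\sum_{i=1}^n d(A_i,F_{o_i})$. The key observation, inherited from the structure exploited in the SD analysis, is that when RSD processes agents in a uniformly random order, the position of a given agent in the order determines how much ``damage'' a late-arriving agent can suffer: an agent placed first in the order is assigned to its globally nearest facility (cost at most its optimal cost), whereas an agent placed last may be forced far away because all nearby facilities are exhausted. The goal is to show that, averaged over all $n!$ orderings, the expected total cost is at most $n\cdot\mbox{cost}(O,I)$.

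**First I would** set up the natural per-agent accounting. For a fixed ordering $\pi$, let $S^\pi$ be the SD assignment under that ordering, and bound $d(A_i,F_{S^\pi_i})$ by a telescoping/triangle-inequality argument: when agent $i$ is assigned, its chosen facility is no farther than any facility still available, in particular no farther than following a chain of displaced agents back to some facility that $O$ leaves free. Concretely, using the same metric estimate as in the primal LP of Lemma~\ref{lem:pathcovercost}, one gets $d(A_i,F_{S^\pi_i})\le d(A_i,F_{o_i})+\sum_{a} (d(A_a,F_{S^\pi_a})+d(A_a,F_{o_a}))$ where the sum ranges over agents on a path in the representation graph preceding $i$. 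Summing over $i$ and taking the expectation over $\pi$ is the route to the bound, since each summand $d(A_a,F_{o_a})$ contributes to the running total a number of times controlled by the random order.

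**The main obstacle** will be to control the worst-case blow-up: a crude union over all $n$ agents, each of whose cost could in principle be as large as a geometrically growing chain (recall $ratio(SD)\le 2^n-1$ for a single bad ordering), does not immediately give a clean linear bound. The cleanest argument is probably to avoid the graph machinery entirely and instead argue directly: for each agent $i$, its expected RSD cost $\mathbb{E}_\pi[d(A_i,F_{S^\pi_i})]$ is bounded by summing, over all other agents $j$, the probability that $j$ blocks $i$'s optimal facility times the incremental detour, plus its own optimal cost. Bounding each agent's expected cost by $\mbox{cost}(O,I)$ and summing over the $n$ agents yields $\mathbb{E}_\pi[\mbox{cost}(S^\pi,I)]\le n\cdot\mbox{cost}(O,I)$, i.e. $ratio(RSD)\le n$.

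**Finally I would** verify truthfulness-in-expectation is not needed for the ratio bound (it is already established, since RSD is universally truthful as a random mixture of truthful serial dictatorships), and check the base and degenerate cases ($n=1$ trivially gives ratio $1$). I expect the technical heart to be the per-agent charging: justifying that each individual agent's expected displacement, measured against the whole optimal cost, never exceeds $\mbox{cost}(O,I)$, which is where the uniform randomness over orderings does the work of damping the exponential worst case observed for a fixed order.
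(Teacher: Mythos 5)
Your proposal does not contain a proof; it contains two candidate strategies and defers the entire technical content to an unproven claim. The decisive step you rely on is that for every agent $i$, the expected cost $\mathbb{E}_\pi[d(A_i,F_{S^\pi_i})]$ is at most the \emph{whole} optimal cost $\mbox{cost}(O,I)$, so that summing over agents gives the factor $n$. You assert this via a charging scheme (``probability that $j$ blocks $i$'s optimal facility times the incremental detour'') but never justify it, and it is exactly here that the difficulty lies: an agent is pushed away from its optimal facility not by a single blocker but by a \emph{chain} of displacements, the detour along such a chain can grow geometrically (this is precisely why $ratio(SD)=2^n-1$ for a fixed bad order), and the probability of a chain forming is correlated across agents in a way your sketch does not control. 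You yourself flag this as ``the technical heart,'' which is an accurate diagnosis that the proof is missing. A telescoping sum over the representation-graph paths, which you mention as the alternative route, runs into the same issue of bounding the expected number and length of paths through a given edge under a random order.

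The paper avoids this per-agent decomposition entirely and argues by induction on $n$. Conditioning on which agent $i$ comes first in the uniformly random order, that agent pays $d_i=\min_j d(A_i,F_j)$ and consumes its nearest facility $t_i$; the key lemma, proved by a one-swap argument plus the triangle inequality, is that $SC_{OPT}(N\setminus\{i\},M\setminus\{t_i\})\le SC_{OPT}(N,M)+d_i$. The induction hypothesis bounds the residual RSD cost by $k$ times this residual optimum, and averaging over $i$ together with $\sum_i d_i\le SC_{OPT}(N,M)$ closes the recurrence at $(k+1)\cdot SC_{OPT}(N,M)$. Your proposal has no analogue of this swap lemma and no mechanism for closing a recursion, so as written there is a genuine gap: the bound you need on each agent's expected displacement is neither proved nor obviously true by the route you describe. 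If you want to salvage your approach, you should either prove the per-agent claim rigorously or switch to the inductive argument, whose whole point is that one never needs to bound any single agent's expected cost in isolation.
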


\begin{proof}
	Here again, we will use the alternative interpretation of the problem, where there are $n$ agents and $n$ items (capacity slots are interpreted as different items). We will prove the lemma by induction on $n$.
	
	When $n=1$, $RSD$ outputs the optimal solution. For the induction step, assume that for $n=k$, it holds that $ratio(RSD)\leq k$ and consider the case when $n=k+1$. Let $d_{i}=\min_{j\in [k+1]}d(A_i,F_j)$ and $t_i=\arg\min_{j\in [k+1]}d(A_i,F_j)$. In slight abuse of notation, let $SC_{RSD}(L,T)$ be the expected social cost of Random Priority on any instance with agents in $L$ and items in $T$. Similarly, let $SC_{OPT}(L,T)$ denote the social cost of the optimal assignment between agents in $L$ and items in $T$. Then, we have that
	
	\begin{equation*}\label{1}
	\begin{split}
	SC_{RSD}(N,M)
	& =    \frac{1}{k+1}\sum_{i=1}^{k+1}\left(d_i+SC_{RSD}(N\!-\!\{i\},M\!-\!\{t_i\})\right)                                   \\
	& \le  \frac{1}{k+1}\sum_{i=1}^{k+1}d_i+\frac{1}{k+1}\sum_{i=1}^{k+1}k\cdot SC_{OPT}(N\!-\!\{i\},M\!-\!\{t_i\})     \\
	& \le  \frac{1}{k+1}\sum_{i=1}^{k+1}d_i+\frac{1}{k+1}\sum_{i=1}^{k+1}k\cdot (SC_{OPT}(N,M)+d_i)     \\
	& \le  \frac{1}{k+1}\sum_{i=1}^{k+1}d_i+\frac{1}{k+1}\sum_{i=1}^{k+1}\left(k\cdot d_i+ k\cdot SC_{OPT}(N,M) \right)  \\
	& \le  (k+1)\cdot SC_{OPT}(N,M)
	\end{split}
	\end{equation*}
	where the first inequality follows from the induction hypothesis and the last inequality follows from the fact that $SC_{OPT}(N,M)\ge \sum_{i=1}^{k+1}d_i$. For the second inequality, observe first that if in the optimal assignment, agent $i$ is matched with $t_i$, then the inequality holds and we are done. Hence, assume without loss of generality that in the optimal assignment, agent $i$ is matched with some item $j$ and item $j^*$ is matched with some agent $i^*$. Then, if we remove agent $i$ and item $j^*$ from the optimal assignment on $N$ and $M$ and add the pair $i^*$ and $j$, we obtain an assignment on $N\!-\!\{i\},M\!-\!\{j^*\}$. Let $S$ be that assignment and let $SC_{S}(N\!-\!\{i\},M\!-\!\{j^*\})$ be its social cost. By the definition of $SC_{OPT}(N\!-\!\{i\},M\!-\!\{j^*\})$, we have
	
	\begin{equation*}
	\begin{split}
	SC_{OPT}(N\!-\!\{i\},M\!-\!\{j^*\})
	& \le SC_{S}(N\!-\!\{i\},M\!-\!\{j^*\})\\
	& \le SC_{OPT}(N,M)-d(A_{i^*},F_{j^*})-d(A_i,F_j)+d(A_{i^*},F_{j})\\
	& \le SC_{OPT}(N,M)+d(A_i,F_{j^*})
	\end{split}
	\end{equation*}
	where the last inequality follows from the triangle inequality. This completes the proof of the lemma.
	\end{proof}
	
	

\section{Lower bounds for Serial Dictatorship and Random Serial Dictatorship}\label{sec:lowerbounds}

In this section, we provide lower bounds on the approximation ratio with augmentation of the two mechanisms that we study. Interestingly, the constructed instances are all on a simple metric space, the \emph{real line metric}. For SD, the lower bounds that we prove 
show that our analysis in Section \ref{sec:analysis} is tight. 
For RSD and augmentation $g=1$, while the bound is not tight, it shows that even if there is a more involved analysis that potentially yields better upper bounds, it is not possible to obtain a much better approximation ratio and in particular, it is not possible to match the logarithmic approximation guarantee of SD with augmentation $g=2$. The lower bounds will be established by the following theorem.  

\begin{theorem}\label{thm:lb}
	The approximation ratio of Serial Dictatorship with augmentation factor $g$ in facility assignment instances with $n$ agents is \begin{enumerate}
		\item $ratio(SD)\geq 2^n-1$
		\item $ratio_2(SD)\geq\log{(n+1)}$ 
		\item $ratio_g(SD) \geq \frac{g}{g-2}-\delta$ for any $\delta>0$ when $g\geq 3$.
		\end{enumerate}
		 The approximation ratio of Random Serial Dictatorship is at least $ratio(RSD) \geq n^{0.26}$ (without resource augmentation).
\end{theorem}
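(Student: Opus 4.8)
The plan is to prove each lower bound by exhibiting an explicit family of instances on the real line whose greedy and optimal behaviors realize, in the limit, the worst-case $g$-trees and tight path coverings that drive the upper bounds of Section~\ref{sec:analysis}. For a fixed augmentation factor $g$, I design the positions of the facilities and agents, together with the SD processing order, so that the induced representation graph is (a truncation of) the extremal $g$-tree: OPT routes every agent to a cheap facility ``one level down'', while SD, breaking ties toward the root, is forced to push one agent ``one level up'' at each facility. Choosing the inter-level distances to grow geometrically makes the per-agent SD costs reproduce exactly the geometric series appearing in the matching path-covering lemma.

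For Statement~1 ($g=1$), concretely place facilities at $F_0=0$ and $F_i=2^{i-1}$ for $1\le i\le n$, and agents at $a_1=1/2$ and $a_i=2^{i-2}=F_{i-1}$ for $2\le i\le n$, processed in the order $a_1,\dots,a_n$. Breaking ties toward the farther facility, SD sends $a_1\to F_1$ and each later $a_i\to F_i$ (since $F_1,\dots,F_{i-1}$ are already occupied and $F_i$ is tied with $F_0$ as the nearest available facility), for a total cost of $\tfrac12+\sum_{i=2}^n 2^{i-2}=2^{n-1}-\tfrac12$, whereas OPT assigns $a_1\to F_0$ and $a_i\to F_{i-1}$ at total cost $\tfrac12$. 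This gives ratio exactly $2^n-1$; an $\varepsilon$-perturbation of the positions removes the dependence on the tie-breaking rule and shows the supremum equals $2^n-1$.

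For Statements~2 and~3 I would generalize this doubling gadget to a complete directed $g$-tree of depth $L$, laying the $g$ child subtrees of each facility in disjoint, geometrically scaled intervals so that, exactly as above, SD crams precisely $g$ agents onto each internal facility (using the augmented capacity) and pushes one agent to its parent, while OPT sends every agent down to a leaf. Summing the resulting series over the $L$ levels yields an SD-to-OPT ratio of $\tfrac{g}{g-2}\bigl(1-o_L(1)\bigr)$ when $g\ge 3$; since this series converges, finite depth only reaches $\tfrac{g}{g-2}-\delta$, which is exactly why the $\delta$ appears. For $g=2$ the series instead diverges with the number of tree nodes $N=n+1$ and accumulates $\log(n+1)$ (the boundary case $g=2$ being precisely where $\tfrac{g}{g-2}$ blows up). These are the very quantities optimized by the coverings of Lemmas~\ref{lem:g1}--\ref{lem:g2}, so the lower bounds meet the upper bounds.

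The RSD bound is the genuinely different and hardest part, because one must control the expected cost averaged over all $n!$ priority orders rather than a single adversarial order. Here I would use a self-similar recursive instance on the line, built from $b$ nested copies of a smaller instance plus a collision gadget, and track two quantities through the recursion: the optimal cost and the expected RSD cost. The key step is to show that, with constant probability, the random priority order places the agents of a sub-instance in a ``bad'' relative order that forces a costly collision and recursively triggers the sub-instance's own blow-up; this produces a recurrence of the form $E[\mathrm{RSD}]\ge \alpha\cdot E[\mathrm{RSD}_{\mathrm{sub}}]$ with $n$ multiplied by $b$ per level, hence $E[\mathrm{RSD}]/SC_{OPT}=\Omega\!\left(n^{\log_b\alpha}\right)$, and optimizing the gadget and the branching factor $b$ to maximize $\log_b\alpha$ yields the exponent $0.26$. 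The main obstacle, where essentially all the work lies, is pinning down the amplification factor $\alpha$ tightly enough (correctly accounting for the correlations the random order induces across the nested copies) to make the exponent as large as $0.26$; by contrast the three deterministic bounds reduce to the geometric bookkeeping sketched above.
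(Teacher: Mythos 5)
Your deterministic bounds (Statements 1--3) follow essentially the same route as the paper: explicit line instances with geometrically growing inter-level distances that force SD to cascade one ``excess'' agent up a level at each facility, so that the SD cost is the geometric series $\ell_0\sum_i (2/g)^i$ against an optimal cost of $\approx\ell_0$. Your $g=1$ computation is correct. One remark on Statements 2--3: you do not need to embed a spatially spread-out complete $g$-tree with disjoint subtree intervals (which would require verifying that a pushed-up agent never prefers a leaf facility of a distant sibling subtree, and that each internal facility fills before its own excess agent is processed). The paper collapses each level of the tree to a single point: it places $\ell_i=c_i=g^{k-i-1}$ coincident agents and one facility of capacity $g^{k-i-1}$ at the point $2^i$, plus an escape facility at $-\epsilon$, and processes agents in non-decreasing level order. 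The representation graph is then automatically the extremal $g$-tree and all the geometric verification disappears. Your version is workable but you would have to actually carry out that verification; as written it is only asserted.

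The RSD bound is where there is a genuine gap, and you have in effect conceded it: you leave the amplification factor $\alpha$ --- i.e.\ the entire quantitative content of the bound --- as ``the main obstacle where essentially all the work lies.'' Moreover, the recurrence $E[\mathrm{RSD}]\ge\alpha\cdot E[\mathrm{RSD}_{\mathrm{sub}}]$ over nested self-similar copies is not how the correlations of a single global uniform order are controlled, and you give no mechanism for doing so. The paper's argument is direct rather than recursive: on the same chain construction with $\ell_i=c_i=1+2\cdot 3^{i-1}$ (so $n_i=3^i$), it defines the ``chain of levels'' event --- for each $i$, some level-$i$ agent appears after all agents of levels $0,\dots,i-1$ --- and proves two lemmas. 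First, by generating the uniform order through successive uniform merges level by level, the probability of this event factorizes exactly as $\prod_{i}\bigl(1-\tfrac{n_{i-1}}{n_i}\bigr)=(2/3)^{k-1}$; this is the step that handles the cross-level correlations you worry about. Second, a deterministic argument (no agent is ever assigned to a facility of a lower level, hence the last-arriving agent of each level is pushed up) shows that on this event the cost is at least $2^{k-1}$. Combining gives expected cost $\ge(4/3)^{k-1}=n^{\log_3(4/3)}\approx n^{0.26}$. Your numerology ($b=3$, per-level gain $4/3$) matches, but without analogues of these two lemmas your proposal does not yet constitute a proof of the RSD statement.
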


All the statements in Theorem \ref{thm:lb} will follow by the same construction, with agents and facilities lying on the real line. Note that similar instances for proving the simplest cases of Theorem \ref{thm:lb} have appeared in the related literature in the past \cite{khuller1994line,meyerson2006randomized,kalyanasundaram2000online}; here we include those instances as part of a more general construction that allows us to obtain lower bounds for different augmentation factors as well as Random Serial Dictatorship.

 Let $k>0$ be a positive integer and $\epsilon>0$. There are $k+2$ points of interest that will host agents and facilities; these have the coordinates $-\epsilon$, $1$, $2$, ..., and $2^{k}$. For $i=0, 1, ..., k-1$, there are $\ell_i$ agents at level $i$ and are located at point $2^i$. For $i=0, 1, ..., k$, we use $n_i=\sum_{j=0}^i{\ell_i}$. Facilities are partitioned into $k+1$ levels; each level has a single facility. The facility of level $0$ has capacity $c_0$ and is located at point $-\epsilon$. For $i=1, 2, ..., k$, the facility of level $i$ is located at point $2^i$ and has capacity $c_i$. The different lower bounds will be obtained by setting the values of the quantities $\ell_i$ and $c_i$ appropriately. Note that the optimal cost is at most $\ell_0 (1+\epsilon)$ which is obtained by assigning the agents of level $i$ to the facility of level $i$ for $i=0, 1, ..., k$. Clearly, the optimal cost can become arbitrarily close to $\ell_0$ by selecting $\epsilon$ to be sufficiently small.

\paragraph{Proof of Statements (1), (2), and (3) in Theorem \ref{thm:lb}} We will set the parameters of the construction appropriately and will consider the execution of SD using any ordering of the agents that is {\em non-decreasing} in terms of level. Let $g\geq 1$ be the augmentation factor; the case $g=1$ will handle the no-augmentation case. We set $\ell_i=c_i=g^{k-i-1}$ for $i=0, 1, ..., k-1$ and $c_k=1$. Note that the $g^k-1$ agents of level $0$ that are considered first will be assigned to the facility of level $1$ which is their closest one; it is at distance $1$ from the agents of level $0$, clearly closer compared to facilities of higher levels but also closer compared to the facility of level $0$ which is at distance $1+\epsilon$ from the agents of level $0$. Note that the (augmented) capacity of the facility of level $1$ is exactly $g^{k-1}$ which means that the agents of level $0$ have occupied it in full. The agents of level $1$ appear next in the ordering and are assigned to facility of level $2$ (since it is the closest facility that has available space). Again, the agents of level $1$ occupy the facility in full. Continuing in this way, we have that the agents of level $i$ (located at point $2^i$) are assigned to the facility of level $i+1$ (at point $2^{i+1}$) for $i=0, 1, ..., k-1$. 

 \begin{figure}\label{fig:lb-instances}
 	\centering
 \begin{tikzpicture}[scale=0.3]
 
 %
 
 \draw (0,0) -- (35,0);
 
 
 \node[rectangle, inner sep=0pt, minimum size=0.5cm,draw=black, fill=lightgray] (F1) at (0,0) {$F_1$}; 
 \node[rectangle, inner sep=0pt, minimum size=0.5cm,draw=black, fill=lightgray] (F2) at (4,0) {$F_2$}; 
 \node[rectangle, inner sep=0pt, minimum size=0.5cm,draw=black, fill=lightgray] (F3) at (8,0) {$F_3$}; 
 \node[rectangle, inner sep=0pt, minimum size=0.5cm,draw=black, fill=lightgray] (F4) at (16,0) {$F_4$}; 
 \node[rectangle, inner sep=0pt, minimum size=0.5cm,draw=black, fill=lightgray] (F5) at (32,0) {$F_5$}; 
 
 \node[rectangle, inner sep=0pt, minimum size=0.3cm,draw=black] (A1) at (2,2) {}; 
 \node[rectangle, inner sep=0pt, minimum size=0.3cm,draw=black ] (A2) at (4,2) {}; 
 \node[rectangle, inner sep=0pt, minimum size=0.3cm,draw=black] (A3) at (8,2) {}; 
 \node[rectangle, inner sep=0pt, minimum size=0.3cm,draw=black] (A4) at (16,2) {}; 
 \node[rectangle, inner sep=0pt, minimum size=0.3cm,draw=black] (A5) at (32,2) {}; 
 
 \node[rectangle, inner sep=0pt, minimum size=0.3cm,draw=black] (A1) at (2,3) {}; 
 \node[rectangle, inner sep=0pt, minimum size=0.3cm,draw=black] (A2) at (4,3) {}; 
 \node[rectangle, inner sep=0pt, minimum size=0.3cm,draw=black] (A3) at (8,3) {}; 
 \node[rectangle, inner sep=0pt, minimum size=0.3cm,draw=black] (A4) at (16,3) {}; 
 \node[rectangle, inner sep=0pt, minimum size=0.3cm,draw=black] (A5) at (32,3) {}; 
 
 \node[rectangle, inner sep=0pt, minimum size=0.3cm,draw=black] (A1) at (2,4) {}; 
 \node[rectangle, inner sep=0pt, minimum size=0.3cm,draw=black] (A2) at (4,4) {}; 
 \node[rectangle, inner sep=0pt, minimum size=0.3cm,draw=black] (A3) at (8,4) {}; 
 \node[rectangle, inner sep=0pt, minimum size=0.3cm,draw=black] (A4) at (16,4) {}; 
 
 \node[rectangle, inner sep=0pt, minimum size=0.3cm,draw=black] (A1) at (2,5) {}; 
 \node[rectangle, inner sep=0pt, minimum size=0.3cm,draw=black] (A2) at (4,5) {}; 
 \node[rectangle, inner sep=0pt, minimum size=0.3cm,draw=black] (A3) at (8,5) {}; 
 
 \node[rectangle, inner sep=0pt, minimum size=0.3cm,draw=black] (A1) at (2,6) {}; 
 \node[rectangle, inner sep=0pt, minimum size=0.3cm,draw=black] (A2) at (4,6) {}; 
 
 \node[rectangle, inner sep=0pt, minimum size=0.3cm,draw=black] (A1) at (2,7) {}; 
 \node[rectangle, inner sep=0pt, minimum size=0.3cm,draw=black] (A2) at (4,7) {}; 
 
 \node[rectangle, inner sep=0pt, minimum size=0.3cm,draw=black] (A1) at (2,8) {}; 
 
 \node[rectangle, inner sep=0pt, minimum size=0.3cm,draw=black] (A1) at (2,9) {}; 
 
 
 \node[below] (F1) at (0,-2) {$-\epsilon$};
 \node[below] (F1) at (2,-2) {$1$};
 \node[below] (F2) at (4,-2) {$2$};
 \node[below] (F3) at (8,-2) {$4$};
 \node[below] (F4) at (16,-2) {$8$};
 \node[below] (F5) at (32,-2) {$16$};
 
 \node[below] (F1) at (0,-1) {$c_0$};
 \node[below] (F2) at (4,-1) {$c_1$};
 \node[below] (F3) at (8,-1) {$c_2$};
 \node[below] (F4) at (16,-1) {$c_3$};
 \node[below] (F5) at (32,-1) {$c_4$};
 
 \node[above] (F1) at (2,10) {$l_0$};
 \node[above] (F2) at (4,10) {$l_1$};
 \node[above] (F3) at (8,10) {$l_2$};
 \node[above] (F4) at (16,10) {$l_3$};
 \node[above] (F5) at (32,10) {$l_4$};

 \end{tikzpicture}
 	\caption{The lower bound construction of Theorem \ref{thm:lb} for $5$ facilities. The gray boxes correspond to facilities, the white boxes correspond to agents. For example, by setting $l_0=c_0=16$, $l_1=c_1=8$, $l_2=c_2=4$, $l_3=c_3=2$ and $l_4=c_4=1$ we obtain the instance for the lower bound when $g=2$.}
 \end{figure}
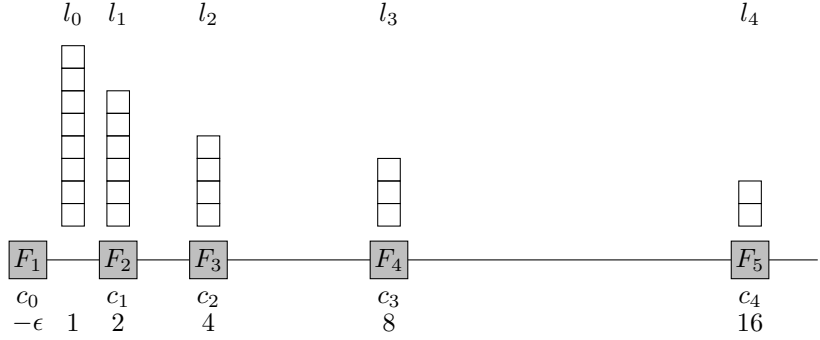

The social cost is then $\sum_{i=0}^{k-1}{g^{k-i-1} 2^i} = \ell_0 \sum_{i=0}^{k-1}{(2/g)^i}$ while the number of agents is $n=\sum_{i=0}^{k-1}{g^{k-i-1}}$. If $g=1$, we have $n=k$ agents and a social cost of $2^k-1=(2^n-1)\ell_0$. If $g=2$, we have $n=2^k-1$ and a social cost of $k \ell_0 = \ell_0 \log{(n+1)}$. Finally, for $g\geq 3$, we have a social cost of $g^{k-1}\frac{1-(2/g)^k}{1-2/g} \geq \ell_0 \left(\frac{g}{g-2}-\delta\right)$, where the inequality holds for every positive $\delta$ by selecting $k$ to be sufficiently large. This completes the proof of the first three statements.

For the lower bound of RSD, we set the parameters of the construction as follows: $\ell_0=c_0=1$ and $\ell_i=c_i=1+2\cdot 3^{i-1}$ for $i=1, ..., k-1$ and $c_k=1$. The proof is slightly more involved. We will need a definition and two technical lemmas. 

\begin{definition}
	An ordering has the ``chain of levels'' property if, for $i=1, 2, ..., k-1$, at least one agent of level $i$ appears after all agents of levels $0, 1, ..., i-1$.
\end{definition}

\begin{lemma}\label{lem:prob}
	The probability that a random ordering of the agents has a chain of levels is $\prod_{i=1}^{k-1}{\left(1-\frac{n_{i-1}}{n_i}\right)}$.
\end{lemma}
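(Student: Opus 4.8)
The plan is to write the chain of levels as an intersection of events and to show these events are independent. For each $i \in \{1,\ldots,k-1\}$ let $E_i$ be the event that at least one level-$i$ agent appears after all agents of levels $0,1,\ldots,i-1$, so that an ordering has a chain of levels exactly when $\bigcap_{i=1}^{k-1} E_i$ occurs. The first step is to recast $E_i$ in a symmetric form. Restricting attention to the $n_i$ agents of levels $0,1,\ldots,i$, I would look at which of them appears last in the random ordering, and argue that $E_i$ holds if and only if this last agent is a level-$i$ agent: if the last of these $n_i$ agents lies in level $i$, it follows all other agents of levels $\le i$, hence all agents of levels $<i$, so $E_i$ holds; conversely, if the last of these $n_i$ agents lies in some level $j<i$, then it follows every level-$i$ agent, so no level-$i$ agent can follow all of levels $0,\ldots,i-1$ and $E_i$ fails.

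The main work, and the step I expect to be the genuine obstacle, is proving that $E_1,\ldots,E_{k-1}$ are mutually independent, each with probability $\ell_i/n_i = 1 - n_{i-1}/n_i$; once this is in hand the product formula is immediate. I would argue via the chain rule, establishing $\Pr[E_i \mid E_1 \cap \cdots \cap E_{i-1}] = \ell_i/n_i$ for every $i$. The key observation enabling this is that each of $E_1,\ldots,E_{i-1}$ depends only on the relative order of the $n_{i-1}$ agents of levels $0,\ldots,i-1$: indeed $E_j$ for $j\le i-1$ references only agents of levels $\le j \le i-1$. It therefore suffices to fix an arbitrary relative order $\sigma$ of these $n_{i-1}$ low-level agents and show that, conditioned on $\sigma$, the probability that the last of the $n_i$ agents of levels $\le i$ is a level-$i$ agent equals $\ell_i/n_i$, a value that does not depend on $\sigma$.

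For this conditional computation I would count permutations of the $n_i$ agents of levels $\le i$ whose restriction to the low-level agents equals $\sigma$. There are $\binom{n_i}{\ell_i}\ell_i! = n_i!/n_{i-1}!$ of them, obtained by choosing the $\ell_i$ slots for the level-$i$ agents and ordering them, the low-level agents filling the remaining slots in the fixed order $\sigma$. Among these, the number placing a level-$i$ agent in the last slot is $\binom{n_i-1}{\ell_i-1}\ell_i!$, so the conditional probability of $E_i$ is $\binom{n_i-1}{\ell_i-1}/\binom{n_i}{\ell_i} = \ell_i/n_i$, independent of $\sigma$ as claimed. Combining this over all $\sigma$ consistent with $E_1 \cap \cdots \cap E_{i-1}$ gives $\Pr[E_i \mid E_1 \cap \cdots \cap E_{i-1}] = \ell_i/n_i$, and multiplying over $i=1,\ldots,k-1$ together with the identity $\ell_i/n_i = 1 - n_{i-1}/n_i$ yields the probability $\prod_{i=1}^{k-1}\left(1 - \frac{n_{i-1}}{n_i}\right)$, completing the argument.
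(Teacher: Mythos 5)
Your proof is correct and follows essentially the same route as the paper's: both reduce the chain-of-levels event to the intersection over $i$ of the events ``the last agent among levels $0,\ldots,i$ belongs to level $i$,'' each of probability $\ell_i/n_i = 1-n_{i-1}/n_i$, and multiply; the paper packages the independence via a sequential random-merge generation of the ordering, while you obtain the same factorization via the chain rule and conditioning on the relative order of the lower-level agents. Your version is, if anything, slightly more explicit about the equivalence of the two formulations of $E_i$ and about why the conditional probability does not depend on that relative order.
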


\begin{proof}
	We can view the generation of a uniformly random ordering of all agents as a process that proceeds level by level. At level $0$, the process simply computes a uniformly random ordering of the agents of level $0$. At level $i>0$, it computes a uniformly random ordering of the agents in levels $0, 1, ..., i$ as follows. It uses the random ordering of the agents in levels $0, 1, ..., i-1$, computes a uniformly random ordering of the agents of level $i$ and picks one among the possible merges of the two orderings uniformly at random. 
	
	Now, in each step $i>0$, the number of possible merges of the two orderings is equal to ${n_i\choose n_{i-1}}$ while the number of merged orderings in which the last agent belongs to level $i$ (as the ``chain of levels'' property requires) is ${n_i-1\choose n_{i-1}}$. Since the random events at the different steps are independent, we obtain that the probability that the resulting ordering will have a chain of levels is $\prod_{i=1}^{k-1}{{n_i-1\choose n_{i-1}}/{n_i\choose n_{i-1}}} = \prod_{i=1}^{k-1}{\left(1-\frac{n_{i-1}}{n_i}\right)}$.
\end{proof}

\begin{lemma}\label{lem:bad}
	Consider the application of SD on the above instance using an ordering of the agents that has a chain of levels. Then, for $i=0, 1, ..., k-1$, at least one agent of level $i$ is assigned to the facility of level $i+1$.
\end{lemma}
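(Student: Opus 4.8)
The plan is to reduce the statement to a single clean fact---that \emph{Serial Dictatorship never assigns any agent to the level-$0$ facility}---and then to establish this fact by examining the first time such an assignment would occur. First I would record the preference order induced by the instance. An agent of level $\ell\ge 1$ sits at $2^\ell$, so its distance to the facility of level $j\ge 1$ is $|2^\ell-2^j|$ and its distance to the level-$0$ facility is $2^\ell+\epsilon$ (a level-$0$ agent sits at $1$). A direct comparison shows that the facilities strictly closer to a level-$\ell$ agent than the level-$0$ facility are exactly those of levels $1,2,\dots,\ell+1$, and that the level-$0$ facility is closer than every facility of level $\ge \ell+2$ (using $2^{\ell+2}-2^\ell=3\cdot 2^\ell>2^\ell+\epsilon$). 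Two consequences drive the proof: (i) a level-$\ell$ agent is assigned to the level-$0$ facility only when all facilities of levels $1,\dots,\ell+1$ are already full, and (ii) if a level-$\ell$ agent is assigned to a facility of level $\ge \ell+2$, then the level-$0$ facility was full at that moment.

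Granting that the level-$0$ facility stays empty, the lemma follows quickly. Since the total capacity of facilities of levels $1,\dots,k$ equals $\sum_{j=1}^{k}c_j=n$, every agent is placed among them. By consequence (ii) and the emptiness of the level-$0$ facility, \emph{no} agent is assigned to a facility whose level exceeds its own by more than one; in particular the $n_i$ agents of levels $0,\dots,i$ all occupy facilities of levels $1,\dots,i+1$. As facilities of levels $1,\dots,i$ have total capacity $\sum_{j=1}^{i}c_j=n_i-1$, at least one agent of level $\le i$ is forced onto the facility of level $i+1$; and since that facility has level exactly one more than the agent's own level, the agent must be of level exactly $i$. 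This is precisely the claim for index $i$.

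It remains to prove that the level-$0$ facility receives no agent. Suppose otherwise and consider the first moment at which an agent---say of level $\ell$, call it $v$---is about to be assigned there. By consequence (i), the facilities of levels $1,\dots,\ell+1$ are all full at this moment; and by minimality the level-$0$ facility has been empty so far, so (again by (ii)) every previously placed agent lies at a facility of level at most its own level plus one. Hence every already-placed agent of level $\le\ell$ lies in facilities of levels $1,\dots,\ell+1$, and since at most $n_\ell-1$ agents of level $\le\ell$ are placed (as $v$ is not), a count of the $\sum_{j=1}^{\ell+1}c_j=n_{\ell+1}-1$ occupied slots shows that at least $\ell_{\ell+1}$ agents of level $\ge\ell+1$ occupy facilities of levels $1,\dots,\ell+1$. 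In the clean case where these high-level agents all sit at the facility of level $\ell+1$, they fill its capacity $\ell_{\ell+1}$ and are therefore \emph{all} the level-$(\ell+1)$ agents, each processed before $v$; but the chain-of-levels property guarantees a level-$(\ell+1)$ agent appearing after all agents of levels $0,\dots,\ell$, hence after $v$---a contradiction.

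The main obstacle is the residual possibility of \emph{downward cascades}: a high-level agent whose home facility is full may be pushed down into a facility of level $\le\ell$, so the high-level agents counted above need not all be level-$(\ell+1)$ agents sitting at facility $\ell+1$, and the clean contradiction does not immediately apply. Handling this is exactly where the specific capacities $c_i=1+2\cdot 3^{i-1}$ are needed. I would promote the count into a cut-based invariant---bounding, for every prefix $1,\dots,j$ of facility levels, the number of occupied slots in terms of the number of processed agents of level $\le j-1$---and prove it is maintained throughout the execution by induction on the processing steps, the geometric growth of the capacities providing precisely the slack that prevents a cascade from filling an entire prefix before the corresponding low-level agents have been processed. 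Combined with the chain-of-levels property, this rules out the first-use event entirely and completes the proof.
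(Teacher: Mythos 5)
Your reduction is sound and in fact cleaner than the way the paper derives the lemma from the key fact: granting that the level-$0$ facility is never used, the observation that every agent then occupies a facility of level at most one above its own, combined with the capacity count $\sum_{j=1}^{i}c_j=n_i-1$, does pin down an agent of level exactly $i$ at the facility of level $i+1$. The problem is that the one claim everything rests on --- that the level-$0$ facility stays empty, equivalently that no ``downward cascade'' occurs --- is exactly the part you do not prove. Your last paragraph is a plan (``I would promote the count into a cut-based invariant \dots and prove it is maintained by induction''), not an argument: the invariant is never stated, and the assertion that ``the geometric growth of the capacities provid[es] precisely the slack'' misdiagnoses the mechanism. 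The factor-$3$ growth of the $c_i$ plays no role in this lemma; it is needed only in Lemma \ref{lem:prob} to make the probability of a chain of levels large enough. All that is used here is $c_j=\ell_j$ for $j\le k-1$.

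The gap closes with a short first-violator argument, which is the paper's route. Suppose some agent of level $i$ is assigned to a facility of level $i'$ with $1\le i'<i$, and take the first such agent $a$. Then facilities $i'+1,\dots,i$ are full; they contain no agent of level $>i$ (that would be an earlier violator) and no agent of level $\le i'$ (for such an agent, facility $i'$ at $2^{i'}$ is strictly closer than any facility in $\{i'+1,\dots,i\}$, and it still has room when $a$ arrives, hence had room all along). So these facilities are filled by agents of levels $i'+1,\dots,i$ other than $a$, of which there are only $\sum_{l=i'+1}^{i}\ell_l-1<\sum_{l=i'+1}^{i}c_l$ --- a contradiction. (The restriction $i'\ge 1$ matters: for $i'=0$ the ``facility $i'$ is closer'' step fails for level-$0$ agents, which is precisely why facility $0$ needs separate treatment.) With downward cascades excluded, your consequence (i) finishes the job directly: if a level-$j$ agent $v$ is about to be sent to facility $0$, then facilities $1,\dots,j+1$ are full and contain only agents of levels $\le j+1$, so their $n_{j+1}-1$ slots together with $v$ account for all $n_{j+1}$ agents of levels $\le j+1$; hence every level-$(j+1)$ agent precedes $v$, contradicting the chain-of-levels guarantee of a level-$(j+1)$ agent after all agents of levels $0,\dots,j$ (the case $j=k-1$ is trivial, since facilities $1,\dots,k$ have total capacity $n$ and cannot all be full while $v$ is still unassigned --- an edge case your clean-case sketch also omits). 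So your architecture survives, but as written the crux is missing and the stated plan for it points at the wrong feature of the instance.
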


\begin{proof}
We first claim that an agent of level $i$ cannot be assigned to a facility of a lower level than $i$. Assume otherwise and consider the first agent $a$ in the ordering which belongs to level $i$ and is assigned to the facility of level $i'<i$. This means that all facilities in levels $i'+1, ..., i$ are full by agents that appear before agent $a$ in the ordering. Note that these facilities cannot contain any agent from levels higher than $i$ (since agent $a$ is the first one that is assigned to a facility of a lower level) or any agent in level $i'$ or lower (since the facility of level $i'$ which is closer to them has free space). Since agent $a$ belongs to level $i$ as well, we obtain that the total capacity of the facilities in levels $i'+1, ..., i$ is strictly smaller than the total number of agents in these levels; this contradicts the definition of the instance.

Now, we will prove the lemma by considering the last agent from each level in an ordering with a chain of levels. When the agent of level $0$ is considered, the chain of levels property guarantees that some of the agents of level $1$ has not appeared yet. Furthermore, the fact that no agent is ever assigned to a facility of lower level implies that the facility of level $1$ (which is closer to the agent compared to the facility of level $0$) has free space and the agent of level $0$ will be assigned to it. Now, consider the last agent of level $1$. When it appears, the facility of level $1$ is full; it contains the agent of level $0$ and the agents of level $1$ before the last one. Again, the chain of levels property guarantees that that some of the agents of level $2$ has not appeared yet. Together with the fact that no agent is ever assigned to a facility of lower level, this leads again to the conclusion that the facility of level $2$ (which is again closer to the agent compared to the facility of level $0$ which is still empty) has free space and the agent of level $0$ will be assigned to it. Continuing this reasoning completes the proof of the lemma.
\end{proof}

We now complete the proof as follows. Observe that the parameters are such that $n_0=1$, $n_i=1+\sum_{j=1}^i{3^{j-1}}=3^i$ for $i=1, ..., k-1$, and the number of agents is $n=n_{k-1}=3^{k-1}$. By Lemma \ref{lem:bad}, we have that if the random ordering used by RSD happens to have a chain of levels, then some agent of level $i$ will be assigned to the facility of level $i+1$, for $i=0, 1, ..., k-1$. The social cost in this case is $2^k-1\geq 2^{k-1}$. By Lemma \ref{lem:prob}, the probability that a random ordering has a chain of level is $(2/3)^{k-1}$. Hence, the expected social cost of RSD is at least $(4/3)^{k-1}=n^{\log_3{(4/3)}} \approx \ell_0 n^{0.26}$ and the bound for RSD follows.

Again, we have the following corollary.

\begin{corollary}
	For the online transportation problem with adversarial arrivals, the double-competitive ratio of Greedy is at least $\log (n+1)$ and the $g$-competitive ratio of Greedy is at least $g/(g-2)-\delta$ for any $\delta >0$, for $g \geq 3$. For the online transportation problem with uniform random arrivals, the competitive ratio of Greedy is at least $n^{0.29}$.
\end{corollary}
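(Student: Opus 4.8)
The plan is to derive the corollary directly from Theorem~\ref{thm:lb} by invoking the equivalence between Serial Dictatorship and the greedy algorithm for the online transportation problem (the one established in the appendix). Recall that in online transportation the requests arrive one at a time and Greedy irrevocably matches each arriving request to the nearest server with free capacity. If the arrival order is a permutation $\pi$ of the agents, then $g$-augmented Greedy on an instance produces exactly the assignment that SD produces on $I_g$ when its agents are processed in order $\pi$, since both match the next point to its nearest currently-available facility. Under this dictionary an \emph{adversarial} arrival order corresponds to the worst-case ordering used in the SD lower bounds, a \emph{uniformly random} arrival order corresponds to RSD, and multiplication of every capacity by $g$ is precisely the $g$-fold resource augmentation of the online algorithm. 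Thus the augmented ratios of Theorem~\ref{thm:lb} become competitive ratios of Greedy.

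For the adversarial statements I would simply reuse the real-line construction behind Theorem~\ref{thm:lb}. That construction forces SD to perform badly \emph{for one specific fixed ordering}, the one that is non-decreasing in level, and this ordering is a legitimate adversarial arrival sequence for Greedy. Feeding it to $g$-augmented Greedy reproduces the same assignment and the same cost, so the bounds $ratio_2(SD)\geq \log(n+1)$ and $ratio_g(SD)\geq \frac{g}{g-2}-\delta$ (for $g\geq3$) transfer verbatim to the double-competitive and the $g$-competitive ratios of Greedy. No new argument is needed here beyond the equivalence.

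For uniform random arrivals the identification with RSD already yields $n^{0.26}$ from Theorem~\ref{thm:lb}. To sharpen this to $n^{0.29}$ I would rerun the RSD construction with a general branching factor $b$ in place of $3$: keep the facilities at $-\epsilon,2,4,\dots,2^k$ and the agents at $1,2,\dots,2^{k-1}$, but choose the multiplicities so that the cumulative agent counts satisfy $n_i=b^i$, i.e.\ $\ell_0=1$ and $\ell_i=(b-1)b^{i-1}$ for $i\geq1$, with matching capacities $c_i=\ell_i$ and $c_k=1$. Neither Lemma~\ref{lem:prob} nor Lemma~\ref{lem:bad} uses the particular value $3$, so both apply unchanged: the chain-of-levels event still has probability $\prod_{i=1}^{k-1}\bigl(1-\tfrac{n_{i-1}}{n_i}\bigr)=(1-1/b)^{k-1}$, and whenever it occurs the greedy cost is at least $\sum_{i=0}^{k-1}2^i=2^k-1\geq 2^{k-1}$. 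Since $n=n_{k-1}=b^{k-1}$ while the optimum is arbitrarily close to $\ell_0=1$, the expected competitive ratio is at least $\bigl(2(1-1/b)\bigr)^{k-1}=n^{\log_b(2-2/b)}$.

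The final step, and the only place requiring genuine work, is optimizing the exponent $\log_b(2-2/b)$ over the branching factor. This function is maximized for $b$ near $4.4$; the convenient integer choice $b=4$ already gives $\log_4(3/2)=\tfrac{\ln(3/2)}{\ln 4}\approx 0.2925>0.29$, which establishes the claimed $n^{0.29}$ bound (and explains why it beats the $n^{0.26}$ obtained at $b=3$). I expect the main obstacle to be bookkeeping rather than conceptual: one must confirm that with the new multiplicities each level facility is again filled exactly in full by the agents of the level below, so that the deterministic ``no agent goes to a lower level'' part of Lemma~\ref{lem:bad} still goes through, and that $n_i=b^i$ is realized by integer counts, which it is for integer $b$. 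Everything else is a direct transcription of the facility-assignment lower bounds through the Greedy/SD equivalence.
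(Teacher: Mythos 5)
Your derivation of the adversarial statements is exactly what the paper intends: the paper offers no separate proof of this corollary (it is stated immediately after the lower-bound constructions with only the remark ``Again, we have the following corollary''), and the intended argument is precisely the transfer of Theorem~\ref{thm:lb} through the SD/Greedy dictionary, using the fact that the bad ordering in the real-line construction is a legitimate adversarial arrival sequence. Where your proposal genuinely departs from the paper is the uniform-random-arrivals bound, and here you have in fact done something the paper does not: Theorem~\ref{thm:lb} only establishes $ratio(RSD)\geq n^{\log_3(4/3)}\approx n^{0.26}$, so a verbatim transfer cannot yield the $n^{0.29}$ claimed in the corollary. Your re-parameterization with general branching factor $b$ (so that $n_i=b^i$, chain-of-levels probability $(1-1/b)^{k-1}$, and exponent $\log_b(2-2/b)$, maximized near $b\approx 4.4$ with the integer choice $b=4$ giving $\log_4(3/2)\approx 0.2925$) is correct and actually supplies the missing justification for the stated exponent. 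The bookkeeping you flag does go through: with $\ell_0=1$, $\ell_i=c_i=(b-1)b^{i-1}$ and $c_k=1$, the argument of Lemma~\ref{lem:bad} only uses that $c_i=\ell_i$ level by level (so each facility is filled exactly in full and no agent can descend to a lower level), and Lemma~\ref{lem:prob} is already stated for arbitrary $n_i$. The net effect is that your proof is more complete than the paper's: as written, the paper's corollary does not follow from its own theorem for the random-arrivals case, and your branching-factor optimization is the natural way to repair it.
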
	

\section{Discussion}
We proposed the employment of a resource augmentation framework for algorithmic mechanism design, where a mechanism, severely limited by the need for truthfulness (or any other desired property for that matter) is given some additional allocative power before being compared to the omnipotent optimal mechanism, which operates under no restrictions. We applied this framework to a natural problem that we call \emph{the facility assignment problem} and proved that a very-well known mechanism, Serial Dictatorship, while often being critized for its bad worst-case guarantees, is actually quite efficient if we allow even limited augmentation. 


The resouce augmentation framework is applicable to other problems in algorithmic mechanism design as well. For example, for the problem of one-sided matchings under general cardinal utilities \cite{HZ:79}, given the strong negative inapproximability bounds on the social welfare given in \cite{FFZ:14}, one could suspect that a resource augmentation approach, where we are allowed to make limited copies of the items, could provide much better approximation guarantees. Similarly, resouce augmentation could find applications in the wide literature on \emph{auctions} where the performance of a truthful mechanism with respect to some objective (e.g. revenue) would be measured against the optimal mechanism which runs on a smaller set of items. 

In fact, the framework can be applied to broader settings where the loss in performance is due to restrictions other than truthfulness, such as fairness \cite{caragiannis2012efficiency,aumann2010efficiency}, stability \cite{anshelevich2009anarchy,emek2015price} or ordinality \cite{FFZ:14,aziz2015egalitarianism}; all the problems in those papers can be studied through the resource augmentation lens. It is not hard to imagine that similar well-known notions like the \emph{Price of Fairness} \cite{caragiannis2012efficiency}, could be redefined to incorporate the possibility of resource augmentation.


Regarding the facility assignment problem, there are some interesting open questions to be answered. We took a positive first step in the study of Random Serial Dictatorship, proving approximation ratio bounds when there is no augmentation. It seems natural to explore what happens when $g\geq2$ for RSD; there does not seem to be a clear way to adapt the lower bound for $g=1$ to work for the case when $g=2$ and on the other hand, modifying the inductive argument of Theorem \ref{thm:RSD} to work for $g=2$ is not straightforward either. It seems like an interesting technical question to obtain (tight) bounds for RSD and for different augmentation factors. It would also be meaningful to consider augmentation factors smaller than $1$; note that a similar construction to the one in our main lower bound can be used to show that additive factors can not achieve significantly improved approximations. 

Finally, it makes sense to consider other truthful mechanisms, beyond the greedy ones, for the facility assignment problem. One could imagine that mechanisms that somehow take into account the numerical values of the distances between facilities and preferred locations, rather than simply the ordering of facilities induced by those distances could potentially outperform the greedy mechanisms above. However, such mechanisms that are also truthful are hard to construct and more importantly, since we are dealing with costs, they have to be \emph{unanimous}, i.e., if there is an assignment of zero social cost, it has to be outputted with probability 1. This immediately precludes using a straightforward adaptation of the mechanisms proposed in \cite{FT:10} (see also the full version of \cite{FFZ:14} for some examples).

Attempting to explore the limitations of truthful mechanisms, one could try to identify some of their structural characteristics, but full characterizations do not exist even for assignments under general preferences.\footnote{The closest thing that we have is the characterization by \cite{SVE:99} for a large class of truthful deterministic mechanisms and the one by \cite{mennle2014axiomatic} for randomized, ordinal truthful mechanisms, but since our setting is more restricted, it is not clear that those characterizations extend either.} Alternatively, we can try to prove lower bounds using only truthfulness as a property. The limitations of all truthful mechanisms for $2$ facilities and arbitrary capacities are settled in Section \ref{sec:twofacilities} of the Appendix, following this approach, where we prove that $SD$ is optimal among all truthful mechanisms for the problem, even randomized ones.
\clearpage
\bibliographystyle{plain}
\bibliography{resource}

\begin{thebibliography}{10}

\bibitem{abdulkadirouglu1998random}
A.~Abdulkadiro{\u{g}}lu and T.~S{\"o}nmez.
\newblock {Random serial dictatorship and the core from random endowments in
  house allocation problems}.
\newblock {\em Econometrica}, pages 689--701, 1998.

\bibitem{ACV15}
F.~Abed, I.~Caragiannis, and A.~A. Voudouris.
\newblock Near-optimal asymmetric binary matrix partitions.
\newblock In {\em Proceedings of the 40th International Symposium on
  Mathematical Foundations of Computer Science ({MFCS})}, pages 1--13, 2015.

\bibitem{anshelevich2010matching}
E.~Anshelevich and S.~Das.
\newblock {Matching, cardinal utility, and social welfare}.
\newblock {\em ACM SIGECom Exchanges}, 9(1):4, 2010.

\bibitem{anshelevich2009anarchy}
E.~Anshelevich, S.~Das, and Y.~Naamad.
\newblock Anarchy, stability, and utopia: Creating better matchings.
\newblock {\em Autonomous Agents and Multi-Agent Systems}, 26(1):120--140,
  2013.

\bibitem{anshelevich2015blind}
Elliot Anshelevich and Shreyas Sekar.
\newblock Blind, greedy, and random: Algorithms for matching and clustering
  using only ordinal information.
\newblock In {\em Proceedings of the 30th AAAI Conference on Artificial
  Intelligence (AAAI)}, pages 390--396, 2016.

\bibitem{antoniadis2014n}
Antonios Antoniadis, Neal Barcelo, Michael Nugent, Kirk Pruhs, and Michele
  Scquizzato.
\newblock A o (n)-competitive deterministic algorithm for online matching on a
  line.
\newblock In {\em International Workshop on Approximation and Online
  Algorithms}, pages 11--22. Springer, 2014.

\bibitem{ACK09}
S.~Athanassopoulos, I.~Caragiannis, and C.~Kaklamanis.
\newblock Analysis of approximation algorithms for \emph{k}-set cover using
  factor-revealing linear programs.
\newblock {\em Theory of Computing Systems}, 45(3):555--576, 2009.

\bibitem{aumann2010efficiency}
Y.~Aumann and Y.~Dombb.
\newblock The efficiency of fair division with connected pieces.
\newblock {\em ACM Transactions on Economics and Computation}, 3(4):art.~23,
  2015.

\bibitem{aziz2015egalitarianism}
H.~Aziz, J.~Chen, A.~Filos-Ratsikas, S.~Mackenzie, and N.~Mattei.
\newblock Egalitarianism of random assignment mechanisms.
\newblock In {\em Proceedings of the 10th International Conference on
  Autonomous Agents and Multiagent Systems (AAMAS)}, 2016.

\bibitem{BM:01}
A.~Bogomolnaia and H.~Moulin.
\newblock {A new solution to the random assignment problem}.
\newblock {\em Journal of Economic Theory}, 100:295--328, 2001.

\bibitem{C09}
I.~Caragiannis.
\newblock Wavelength management in {WDM} rings to maximize the number of
  connections.
\newblock {\em SIAM Journal on Discrete Mathematics}, 23(2):959--978, 2009.

\bibitem{caragiannis2012efficiency}
I.~Caragiannis, C.~Kaklamanis, P.~Kanellopoulos, and M.~Kyropoulou.
\newblock The efficiency of fair division.
\newblock {\em Theory of Computing Systems}, 50(4):589--610, 2012.

\bibitem{CS14}
D.~Chakrabarty and C.~Swamy.
\newblock {Welfare maximization and truthfulness in mechanism design with
  ordinal preferences}.
\newblock In {\em Proceedings of the 5th Conference on Innovations in
  Theoretical Computer Science (ITCS)}, pages 105--120, 2014.

\bibitem{chung2008online}
C.~Chung, K.~Pruhs, and P.~Uthaisombut.
\newblock The online transportation problem: On the exponential boost of one
  extra server.
\newblock In {\em Proceedings of the 8th Latin American Symposium on
  Theoretical Informatics (LATIN)}, pages 228--239. 2008.

\bibitem{emek2015price}
Y.~Emek, T.~Langner, and R.~Wattenhofer.
\newblock The price of matching with metric preferences.
\newblock In {\em Proceedings of the 23rd Annual European Symposium on
  Algorithms (ESA)}, pages 459--470. 2015.

\bibitem{FJ15}
U.~Feige and S.~Jozeph.
\newblock Oblivious algorithms for the maximum directed cut problem.
\newblock {\em Algorithmica}, 71(2):409--428, 2015.

\bibitem{FT:10}
U.~Feige and M.~Tennenholtz.
\newblock Responsive lotteries.
\newblock In {\em Proceedings of the 3rd International Symposium on Algorithmic
  Game Theory (SAGT)}, pages 150--161. 2010.

\bibitem{FFZ:14}
A.~Filos-Ratsikas, S.~K.~S. Frederiksen, and J.~Zhang.
\newblock Social welfare in one-sided matchings: Random priority and beyond.
\newblock In {\em Proceedings of the 7th International Symposium on Algorithmic
  Game Theory (SAGT)}, pages 1--12, 2014.

\bibitem{FM:13}
A.~Filos-Ratsikas and P.~B. Miltersen.
\newblock Truthful approximations to range voting.
\newblock In {\em Proceedings of the 10th International Conference on Web and
  Internet Economics (WINE)}, pages 175--188, 2014.

\bibitem{GC:10}
M.~Guo and V.~Conitzer.
\newblock {Strategy-proof allocation of multiple items between two agents
  without payments or priors}.
\newblock In {\em Proceedings of the 9th International Conference on Autonomous
  Agents and Multiagent Systems (AAMAS)}, pages 881--888, 2010.

\bibitem{HZ:79}
A.~Hylland and R.~Zeckhauser.
\newblock {The efficient allocation of individuals to positions}.
\newblock {\em The Journal of Political Economy}, 87(2):293--314, 1979.

\bibitem{JMM+03}
K.~Jain, M.~Mahdian, E.~Markakis, A.~Saberi, and V.~V. Vazirani.
\newblock Greedy facility location algorithms analyzed using dual fitting with
  factor-revealing {LP}.
\newblock {\em Journal of the ACM}, 50(6):795--824, 2003.

\bibitem{kalyanasundaram1993online}
B.~Kalyanasundaram and K.~Pruhs.
\newblock Online weighted matching.
\newblock {\em Journal of Algorithms}, 14(3):478--488, 1993.

\bibitem{kalyanasundaram2000online}
B.~Kalyanasundaram and K.~Pruhs.
\newblock The online transportation problem.
\newblock {\em SIAM Journal on Discrete Mathematics}, 13(3):370--383, 2000.

\bibitem{kalyanasundaram2000speed}
B.~Kalyanasundaram and K.~Pruhs.
\newblock Speed is as powerful as clairvoyance.
\newblock {\em Journal of the ACM}, 47(4):617--643, 2000.

\bibitem{khuller1994line}
S.~Khuller, S.~G. Mitchell, and V.~V. Vazirani.
\newblock On-line algorithms for weighted bipartite matching and stable
  marriages.
\newblock {\em Theoretical Computer Science}, 127(2):255--267, 1994.

\bibitem{koutsoupias1999weak}
E.~Koutsoupias.
\newblock Weak adversaries for the k-server problem.
\newblock In {\em Proceedings of the 40th Annual Symposium on Foundations of
  Computer Science (FOCS)}, pages 444--449, 1999.

\bibitem{koutsoupias2003online}
Elias Koutsoupias and Akash Nanavati.
\newblock The online matching problem on a line.
\newblock In {\em International Workshop on Approximation and Online
  Algorithms}, pages 179--191. Springer, 2003.

\bibitem{krysta2014size}
P.~Krysta, D.~Manlove, B.~Rastegari, and J.~Zhang.
\newblock {Size versus truthfulness in the House Allocation problem}.
\newblock In {\em Proceedings of the 15th ACM Conference on Economics and
  Computation (EC)}, pages 453--470, 2014.

\bibitem{MY11}
M.~Mahdian and Q.~Yan.
\newblock Online bipartite matching with random arrivals: an approach based on
  strongly factor-revealing lps.
\newblock In {\em Proceedings of the 43rd {ACM} Symposium on Theory of
  Computing ({STOC})}, pages 597--606, 2011.

\bibitem{mennle2014axiomatic}
T.~Mennle and S.~Seuken.
\newblock {An axiomatic approach to characterizing and relaxing
  strategyproofness of one-sided matching mechanisms}.
\newblock In {\em Proceedings of the 15th ACM Conference on Economics and
  Computation (EC)}, pages 37--38, 2014.

\bibitem{meyerson2006randomized}
A.~Meyerson, A.~Nanavati, and L.~Poplawski.
\newblock Randomized online algorithms for minimum metric bipartite matching.
\newblock In {\em Proceedings of the 17th Annual ACM-SIAM Symposium on Discrete
  Algorithms (SODA)}, pages 954--959, 2006.

\bibitem{PT:09}
A.~D. Procaccia and M.~Tennenholtz.
\newblock {Approximate mechanism design without money}.
\newblock {\em ACM Transactions on Economics and Computation}, 1(4):art.~18,
  2013.

\bibitem{sleator1985amortized}
D.~D. Sleator and R.~E. Tarjan.
\newblock Amortized efficiency of list update and paging rules.
\newblock {\em Communications of the ACM}, 28(2):202--208, 1985.

\bibitem{SVE:99}
L.-G. Svensson.
\newblock {Strategy-proof allocation of indivisble goods}.
\newblock {\em Social Choice and Welfare}, 16(4):557--567, 1999.

\bibitem{young1994thek}
N.~Young.
\newblock The k-server dual and loose competitiveness for paging.
\newblock {\em Algorithmica}, 11(6):525--541, 1994.

\end{thebibliography}

\clearpage
\appendix
\section*{Appendix}

\section{The online transportation problem} \label{sec:onlinetransportation}


As we mentioned earlier, there is a connection between the facility assignment problem and the \emph{online transportation problem} \cite{kalyanasundaram2000online} (also known as the minimum online metric bipartite matching or simply online metric matching \cite{meyerson2006randomized,koutsoupias2003online,khuller1994line}). 
In the \emph{online transportation problem}, there is a set of points $F$ on a metric space and a set of points $A$ that arrive in an online fashion. At each time that a point in $A$ arrives, it has to be matched to a point in $F$. The performance of an online algorithm is measured by its \emph{competitive ratio}, i.e., the worst-case ratio over all inputs of the social cost of the algorithm over the social cost of the optimal matching, that knows the exact sequence of arriving points in advance. Our setting can be interpreted as a similar metric matching problem, by ``splitting'' facilities with capacity $c_i>1$ to facilities of unit capacity that coincide on the metric space and by interpreting facilities as single, indivisible objects. Given this interpretation, SD and RSD can be thought of as greedy algorithms for the problem above. In particular, SD corresponds to the greedy algorithm in the setting with adversarial arrivals and RSD corresponds to the greedy algorithm when points in $A$ arrive uniformly at random. 

For the online transportation problem, it was known since the early 90s that without augmentation, Greedy achieves a competitive ratio of $2^{n}-1$ \cite{kalyanasundaram1993online}. Later on, Kalyanasundaram and Pruhs \cite{kalyanasundaram2000online} proved that when the online algorithm operates on doubled capacities, Greedy is $\Theta(\log n)$-competitive; given the discussion above, this implies a $\Theta(\log n)$-approximation bound for SD with $g=2$ in our setting. Note however that unlike the result in \cite{kalyanasundaram2000online}, our analysis is \emph{exact}, i.e., our $\log (n+1)$ bound involves no asymptotics.
Furthermore, we extend the result by proving exact bounds for any augmentation factor $g\geq 3$; the bounds are all small constants and in fact the ratio goes to $1$ as the augmentation factor grows large. 
These results naturally extend to the online transportation problem and confirm a conjecture by Chung et al. \cite{chung2008online}, namely that a constant competitive ratio can be achieved with augmentation factor $3$. 
Our results for RSD also imply upper and lower bounds for the performance of Greedy in the online transportation problem with uniform random arrivals. Specifically, Theorem \ref{thm:SD} and Theorem \ref{thm:RSD} give rise to the following corollary. We state the corollary using the terminology of the online problem \cite{kalyanasundaram2000online} for consistency.

\begin{corollary}
	The double-competitive ratio of \emph{Greedy} for the online transportation problem is at most $\log (n+1)$. The $g$-competitive ratio of \emph{Greedy} is at most $g/(g-2)$. The competitive ratio of \emph{Greedy} for the online transportation problem with uniform random arrivals is at most $n$. 
\end{corollary}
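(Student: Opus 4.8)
The plan is to prove all three bounds by reduction to the facility assignment results already established: Theorem \ref{thm:SD} for the two deterministic (adversarial) parts and Theorem \ref{thm:RSD} for the random part. The crux is to exhibit a cost-preserving dictionary between the online transportation problem and facility assignment under which Greedy with adversarial arrivals becomes SD, Greedy with uniform random arrivals becomes RSD, and running Greedy on capacities scaled by $g$ becomes running the mechanism on the $g$-augmented instance $I_g$. Once this correspondence is in place, each claimed bound is simply the corresponding line of Theorem \ref{thm:SD} or Theorem \ref{thm:RSD} read in the online vocabulary.

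First I would set up the reduction, following the informal description already given in this appendix. Given an online transportation instance, I identify the arriving points with agents and the server points $F$ with facilities; a facility of capacity $c_i$ is realized by coinciding $c_i$ unit-capacity servers at a single location, exactly as in the Preliminaries, so that $\sum_i c_i \geq n$ and all pairwise distances are unchanged. The offline optimum of the online problem — the minimum-cost matching computed with full knowledge of the arrival sequence — coincides with $SC_{OPT}(I)$, since advance knowledge of the order confers no advantage on a fully offline minimum-cost matching. On the algorithm side, Greedy matches each arriving point to its nearest currently-free server; reading the arrival order as the order in which SD processes agents, this is precisely the SD assignment, and when the servers carry the $g$-scaled capacities it is the SD assignment on $I_g$. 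Because the guarantee of Theorem \ref{thm:SD} holds for \emph{every} assignment consistent with the SD mechanism, it holds in particular for the worst-case ordering, which is exactly the order an online adversary would choose; and drawing the arrival order uniformly at random is identical to the random ordering used by RSD.

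With the dictionary fixed, the three bounds follow immediately. The $g$-competitive ratio of Greedy equals $\sup_{I} SC_{\mathrm{Greedy}}(I_g)/SC_{OPT}(I) = ratio_g(SD)$, so Theorem \ref{thm:SD} yields $ratio_2(SD) \leq \log(n+1)$ for the double-competitive ratio and $ratio_g(SD) \leq g/(g-2)$ for every integer $g \geq 3$ for the $g$-competitive ratio. For uniform random arrivals without augmentation, the competitive ratio equals $ratio(RSD)$, and Theorem \ref{thm:RSD} bounds this quantity by $n$.

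The main obstacle is not the numerical bounds — those are inherited verbatim — but verifying that the reduction is faithful in every respect: that coinciding servers produces a genuine metric with unchanged pairwise distances, that the offline optimum truly reduces to an order-oblivious minimum-cost matching, and (most delicately) that the supremum over instances available to the online adversary corresponds exactly to the supremum over instances and consistent orderings implicit in $ratio_g(SD)$. Care is also needed to confirm that the parameter $n$ appearing in the competitive-ratio bounds matches the number of agents $n$ in Theorems \ref{thm:SD} and \ref{thm:RSD}, i.e.\ the number of arriving requests, rather than the total number of unit servers $\sum_i c_i$.
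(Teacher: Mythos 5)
Your proposal is correct and follows essentially the same route as the paper: the paper's own justification is exactly this dictionary (splitting capacity-$c_i$ facilities into $c_i$ coinciding unit servers, identifying Greedy under adversarial arrivals with SD on an arbitrary ordering and Greedy under uniform random arrivals with RSD), after which the three bounds are read off from Theorem~\ref{thm:SD} (statements 2 and 3) and Theorem~\ref{thm:RSD}. Your added care about the worst-case ordering being covered because Theorem~\ref{thm:SD} holds for \emph{every} assignment consistent with SD, and about $n$ counting arriving requests rather than unit servers, matches the paper's implicit reasoning and fills in no more than the paper itself leaves informal.
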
	
Compared to the related result in \cite{kalyanasundaram2000online}, we remark that our analysis is substantially different due to the use of linear programming. This technique for the analysis of purely combinatorial algorithms has found applications in many different contexts such as facility location \cite{JMM+03}, set cover \cite{ACK09}, online matching \cite{MY11}, maximum directed cut \cite{FJ15}, wavelength routing \cite{C09}, and revenue optimization \cite{ACV15}. Like in our case, these techniques usually lead to tight analysis. Also note that while the connection between SD and RSD and the greedy algorithm for the online transportation problem is straightforward, the two problems are fundamentally different and hence non-greedy online competitive algorithms do not imply any bounds for our setting and non-serial truthful mechanisms do not imply any bounds for the online setting. 

\section{Facility assignment for two facilities} \label{sec:twofacilities}

In this section, we settle the question of truthful mechanisms for two facilities and arbitrary capacities, when there is no resource augmentation. Recall that for $m=2$, the approximation ratio of SD is $3$, and hence it suffices to prove that no truthful-in-expectation mechanism can achieve a better ratio. 

\begin{theorem}\label{thm:proof-of-concept}
	Let $M$ be any truthful-in-expectation mechanism and let $m=2$. Then, $ratio(M) \geq 3$.
\end{theorem}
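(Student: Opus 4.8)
The plan is to establish a lower bound of $3$ on the approximation ratio of any truthful-in-expectation mechanism for the case of two facilities by constructing a family of instances on the real line and exploiting the truthfulness constraints to force any good mechanism into an unavoidable tradeoff. Concretely, I would place two facilities $F_1$ and $F_2$ at distinct points (say $F_1 = 0$ and $F_2 = 1$) with capacities $c_1$ and $c_2$, and consider location profiles where a block of agents sits at or near one facility while a single ``pivotal'' agent is positioned so that its optimal assignment depends delicately on its reported location. The key idea is that truthfulness forces the probability that an agent is matched to a given facility to be monotone in a way that couples the mechanism's behavior across nearby profiles, so that if the mechanism assigns the pivotal agent cheaply in one profile, it must pay dearly in a neighboring profile where the optimal cost is small.

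First I would set up the truthfulness machinery for a single agent on a two-point support. For an agent whose preferred location is $A_i$, the mechanism induces a probability $q = \Pr[S_i = F_2]$ of assignment to the far facility, and its expected cost is $(1-q)\,d(A_i,F_1) + q\,d(A_i,F_2)$. The truthfulness-in-expectation constraint $\mathbb{E}_{S \sim D}[d_i(S_i)] \ge \mathbb{E}_{S \sim D'}[d_i(S_i)]$ implies, by a standard argument, that this assignment probability $q$ is monotone as the agent's reported position slides from $F_1$ toward $F_2$ (an agent reporting closer to $F_2$ cannot receive a strictly smaller probability of being matched to $F_2$, else it would deviate). I would then design the instance so that capacity is tight: the bulk of the agents must fill up $c_1$ slots at $F_1$ and $c_2$ slots at $F_2$, leaving exactly one free slot whose fate is governed by the pivotal agent's report. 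By moving the pivotal agent's location and comparing the mechanism's expected cost against the optimal cost (which is near-zero when the agent coincides with its facility), the monotonicity forces a nontrivial probability mass of misassignment that I can lower-bound.

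The main obstacle I anticipate is pinning down the worst-case ratio to be exactly $3$ rather than merely some constant greater than $1$: this requires choosing the instance geometry and the relative capacities so that the forced misassignment probability, weighted by the distance penalty, precisely matches the ratio $3$ achieved by SD. I expect the tight construction to involve two agents (or two groups) whose positions are symmetric about the midpoint of $F_1F_2$, where for one symmetric profile the optimal matching pairs each agent with its near facility at cost $O(\epsilon)$, while truthfulness prevents the mechanism from simultaneously doing this for both profiles in the symmetric family. Averaging the mechanism's cost over the symmetric pair of profiles and comparing to the common optimal cost should yield the factor $3$. The delicate part is the bookkeeping: I would integrate or sum the truthfulness monotonicity constraint over the pivotal agent's trajectory to show that the total ``probability of error'' cannot be reduced below the threshold that produces ratio $3$, and then verify that no clever correlated randomization across agents circumvents this, since the single-agent truthfulness constraint already constrains each agent's marginal assignment probability independently.

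Finally, I would assemble these pieces: fix the tight instance family, invoke the monotonicity of the marginal assignment probability forced by truthfulness-in-expectation, and conclude that $\sup_I \frac{SC_M(I)}{SC_{OPT}(I)} \ge 3$ by exhibiting a specific profile (or symmetric pair) on which the expected social cost is at least $3$ times the optimal. Since SD is already known to achieve ratio $3$ for $m=2$, this matching lower bound settles optimality of SD among all truthful-in-expectation mechanisms, which is precisely the content of the theorem.
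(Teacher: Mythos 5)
Your high-level plan (a pivotal agent whose assignment probability is pinned down by truthfulness, on an instance where misassignment cascades into a large cost) is the right flavor, but two essential ingredients are missing, and the construction you sketch would not actually produce the constant $3$. First, monotonicity of the probability $q=\Pr[S_i=F_2]$ in the reported position is not enough by itself: it only relates $q$ at different reports, and you never obtain a numerical \emph{anchor} for $q$ at any profile. A non-anonymous mechanism is free to treat the pivotal agent arbitrarily, so no amount of ``integrating the monotonicity constraint along the trajectory'' yields a quantitative bound. The paper supplies the anchor by first symmetrizing: any truthful-in-expectation mechanism can be converted into an anonymous one (apply a uniformly random permutation to the agents' indices before running $M$) without degrading the ratio or truthfulness. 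Then, at the profile where all $n$ agents report $F_1$ (with $c_1=n-1$, $c_2=1$), anonymity forces each agent to be sent to $F_2$ with probability exactly $1/n$, and truthfulness of the pivotal agent---who strictly prefers $F_1$---forces its probability of going to $F_2$ at the \emph{true} profile to be at most $1/n$. This anonymization step, or some substitute for it, is indispensable and absent from your proposal.

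Second, your anticipated tight geometry (two agents or groups symmetric about the midpoint of $F_1F_2$, with averaging over a symmetric pair of profiles) does not give $3$; a symmetric two-agent swap only loses an $O(\epsilon)$ additive amount and yields a ratio tending to $1$. The factor $3$ comes from an asymmetric construction: $|F_1-F_2|=2+\epsilon$, capacities $c_1=n-1$ and $c_2=1$, agents $1,\dots,n-1$ at $F_1$, and agent $n$ at $F_1+1$. The optimum sends agent $n$ to $F_2$ at cost $1+\epsilon$. But with probability at least $(n-1)/n$ the mechanism sends agent $n$ to $F_1$, which displaces one of the agents located at $F_1$ to $F_2$ at cost $2+\epsilon$, for total cost $3+\epsilon$; letting $\epsilon\to 0$ and $n\to\infty$ gives the ratio $3=\lim\,(1+(2+\epsilon))/(1+\epsilon)$. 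Without the capacity asymmetry and the crowd of co-located agents, the ``cascade'' that converts a single misassignment into a tripled cost does not occur, so as written your argument would at best establish a ratio bounded away from $1$, not the tight bound $3$.
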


\begin{proof}
	First, we claim that if there is a truthful-in-expectation mechanism $M$ such that $ratio_g(M)=\alpha$, then, there exists a truthful-in-expectation, anonymous mechanism $M'$ such that $ratio_g(M')\leq \alpha$ for any augmentation factor $g$. To see this, let Let $M'$ be the mechanism that given any instance $I_g$ applies a uniformly random permutation to the set of indices of the agents and then applies $M$ on $I_g$. The mechanism is clearly anonymous. Furthermore, since $I_g$ is a valid input to $M$, the approximation ratio of $M'$ can not be worse than that of $M$, since the approximation ratio is calculated over all possible input instances. For the same reason, if $M$ is truthful and since the permutation is independent of the reports, $M'$ is truthful-in-expectation. Similar arguments have been used before to prove similar statements in other contexts \cite{FFZ:14,FM:13,GC:10}.
	
	Now let $F_1$ and $F_2$ be the positions of the two facilities and let $|F_1-F_2| =2+\epsilon$. Let $c_1=n-1$ and $c_2=1$ and let $I=(A,F)$ be an instance such that $A_1=\ldots=A_{n-1}=F_1$ and $A_n=F_1+1$. By the discussion above we can without loss of generality assume that $M$ is anonymous. Let $p_n(A)$ be the probability that agent $n$ is assigned to facility $2$ in $A$. Next consider the profile $A'=(A_1,\ldots,A_{n-1},A'_n)$ with $A'_n = F_1$. By anonymity, the probability that agent $n$ is assigned to facility $2$ is $1/n$. Since agent $n$ prefers facility $1$ to facility $2$, truthfulness implies that on instance $A$, it must hold that $p_n(A) \leq 1/n$, otherwise agent $n$ would have an incentive to mireport $A'_n$ instead of $A_n$. This implies that on instance $A$, the expected welfare of $M$ is at most $\frac{1}{n}(1+\epsilon)+\frac{n-1}{n} (3+\epsilon)$ while the optimal social welfare is $1+\epsilon$. As $\epsilon \rightarrow 0$ and $n \rightarrow \infty$, the ratio goes to $3$.  
\end{proof}

\end{document}